\newcommand{\reals}{\mathbb{R}}
\newcommand{\complex}{\mathbb{C}}
\newcommand{\integers}{\mathbb{Z}}
\newcommand{\bracketb}[1]{\Big[#1\Big]}
\newcommand{\bracketc}[1]{\bigg[#1\bigg]}
\newcommand{\bracketd}[1]{\Bigg[#1\Bigg]}
\newcommand{\angles}[1]{\left\langle #1 \right\rangle}
\newcommand{\pb}[1]{\left\{#1\right\}}
\newcommand{\com}[1]{\left[#1\right]}
\newcommand{\paraa}[1]{\big(#1\big)}
\newcommand{\parab}[1]{\Big(#1\Big)}
\newcommand{\parac}[1]{\bigg(#1\bigg)}
\newcommand{\diag}{\operatorname{diag}}
\newcommand{\spacearound}[1]{\quad#1\quad}
\newcommand{\equivalent}{\spacearound{\Longleftrightarrow}}
\newtheorem{theorem}{Theorem}[section]
\newtheorem{proposition}[theorem]{Proposition}
\newtheorem{lemma}[theorem]{Lemma}
\newtheorem{definition}[theorem]{Definition}
\newtheorem{remark}[theorem]{Remark}
\newcommand{\xv}{\vec{x}}
\newcommand{\uv}{\vec{u}}
\newcommand{\vv}{\vec{v}}
\newcommand{\tr}{\operatorname{tr}}
\newcommand{\Et}{\tilde{E}}
\newcommand{\Dt}{\tilde{D}}
\newcommand{\deltat}{\tilde{\delta}}
\newcommand{\betat}{\tilde{\beta}}
\newcommand{\gammat}{\tilde{\gamma}}
\renewcommand{\deltat}{\tilde{\delta}}
\newcommand{\bt}{\betat}
\newcommand{\gt}{\gammat}
\newcommand{\dt}{\tilde{d}}
\newcommand{\kt}{\tilde{k}}
\newcommand{\mt}{\tilde{m}}
\newcommand{\nt}{\tilde{n}}
\newcommand{\et}{\tilde{e}}
\renewcommand{\mid}{\mathds{1}}
\newcommand{\C}{\mathcal{C}}
\newcommand{\Ch}{\hat{C}}
\newcommand{\ch}{\hat{c}}
\newcommand{\CLa}{\C_{L,a}}
\newcommand{\A}{\mathcal{A}}
\newcommand{\twomatrix}[4]{\begin{pmatrix}#1 & #2 \\ #3 & #4\end{pmatrix}}
\newcommand{\twovec}[2]{\begin{pmatrix}#1 \\ #2 \end{pmatrix}}
\newcommand{\Amatrix}{\twomatrix{\alpha}{\beta}{\gamma}{\delta}}
\newcommand{\half}{\frac{1}{2}}
\newcommand{\PEEt}{\mathcal{P}[E,\Et]}
\newcommand{\PDDt}{\mathcal{P}[D,\Dt]}
\newcommand{\Lh}{\hat{L}}
\newcommand{\Wd}{W^\dagger}
\newcommand{\vt}{\tilde{v}}
\newcommand{\Rtwoplus}{\reals^2_{>0}}
\newcommand{\Rtwopluszero}{\reals^2_{\geq 0}}
\newcommand{\lp}{\lambda_+}
\newcommand{\lm}{\lambda_-}
\newcommand{\Fpn}{F_+^n}
\newcommand{\Fmn}{F_-^n}
\tikzstyle{edge}=[color=black,line width=1px,arrows=-latex]
\tikzstyle{vertex}=[fill=black,shape=circle,inner sep=0px,outer sep=0px,minimum size=7px]
\def\vtextn(#1)#2{\draw (#1.north) node[anchor=south] {#2};}
\def\vtexts(#1)#2{\draw (#1.south) node[anchor=north] {#2};}
\def\vtexte(#1)#2{\draw (#1.east) node[anchor=west] {#2};}
\def\vtextw(#1)#2{\draw (#1.west) node[anchor=east] {#2};}
\def\vertex(#1)(#2){\path (#2) node[fill=black,outer sep=0cm,inner sep=0cm,minimum size=0.2cm,circle](#1) {};}
\def\edge(#1)(#2){\draw (#1) -- (#2);}
\def\diredge(#1)(#2){\draw [->] (#1) -- (#2);}
\numberwithin{equation}{section}
\title[Affine crossed product algebras and noncommutative
  surfaces]{Affine transformation crossed product like\\algebras and noncommutative
  surfaces}
\author{Joakim Arnlind}
\address{Albert Einstein Institute, Am M\"uhlenberg 1, D-14476 Golm, Germany.}
\email{joakim.arnlind@aei.mpg.de}
\author{Sergei Silvestrov}
\address{Center for Mathematical Sciences, Box 118, S-22100 Lund, Sweden.}
\email{sergei.silvestrov@math.lth.se}
\date{March, 2009}
\subjclass[2000]{Primary 16S35, 16G99}
\keywords{representations, algebras, surfaces, dynamical systems, orbits}
\begin{document}

\maketitle

\begin{abstract}
  Several classes of $*$-algebras associated to the action of an
  affine transformation are considered, and an investigation of the
  interplay between the different classes of algebras is initiated.
  Connections are established that relate representations of $*$-algebras,
  geometry of algebraic surfaces, dynamics of affine transformations,
  graphs and algebras coming from a quantization procedure of
  Poisson structures.  In particular, algebras related to surfaces
  being inverse images of fourth order polynomials (in $\reals^3$) are
  studied in detail, and a close link between representation theory
  and geometric properties is established for compact as well as
  non-compact surfaces.
\end{abstract}

\section{Introduction}

\noindent The interplay between representation theory of $*$-algebras and
dynamical systems or more general actions of groups or semi-groups is
an expanding area of investigation deeply intertwined with origins of
quantum mechanics, foundations of invariants and number theory,
symmetry analysis, symplectic geometry, dynamical systems and ergodic
theory and several other parts of mathematics that are fundamental for
modern physics and engineering.  There has been three main
frameworks for investigation of such broad interplay. These frameworks
are intertwining greatly in terms of mathematical ideas, constructions
and goals, but developed to some extant independently in the last
sixty years due to historical and other reasons. One approach is based
on the systematic use of crossed product type operator algebras, that
is $C^*$-algebras and $W^*$-algebras, constructed as crossed products
of a "coefficient" algebra with a group (or more generally a
semi-group) acting on it.  In particular, for a topological space, an
algebra of continuous functions encodes properties of the space. The
dynamics given by iteration of transformations of the topological
space is encoded then by combining the commutative algebra of
continuous functions with the action into non-commutative
$C^*$-algebras or $W^*$-algebras with the product defined by a kind of
a generalized convolution twisted by the action.  Properties of the
action then correspond to properties of the corresponding crossed
product $C^*$-algebras and $W^*$-algebras and their
$*$-representations by operators on Hilbert spaces.  This approach can
historically be viewed as a vast extension of the theory of induced
representations of finite and compact groups on the one hand and as a
general abstract framework for foundations of quantum mechanics and
quantum field theory on the other hand \cite{EF1, EF2, EF3, Gl1, Gl2,
  Jorg-b-1, Jorg-art-2, MACbook1, MACbook2, MACbook3}.  In this
approach, representations of the corresponding $C^*$-algebras and
$W^*$-algebras are typically the $*$-representations by bounded
operators, a restriction inherited from the norm structures of
$C^*$-algebras and $W^*$-algebras. That restriction, while not
significant in some contexts such as for example those involving
dynamics or action on compact spaces, becomes an obstacle in the
context of quantum mechanics where unbounded operators and actions on
non-compact spaces play crucial role. Some classes of unbounded
operators are still manageable in this approach by some affiliation
procedures, that basically amounts to finding some specific functions
or other procedures making those unbounded operators into bounded ones
belonging to representations of some $C^*$-algebras and
$W^*$-algebras. Then, by working with these "bounded shadows" within
$C^*$-algebras and $W^*$-algebras, some properties of the affiliated
unbounded operators are traced back using the intrinsic properties of
the affiliation procedure.  This is however more an escape route rather
then a general approach, since there are often classes of
representations by unbounded operators, associated to corresponding
actions, that fall outside the applicability range of the specific
affiliation procedures. On the other hand, this approach based on
using $C^*$-algebras, $W^*$-algebras and more general Banach algebras,
without making specific choices of generators of the algebras, may be
viewed as a kind of non-commutative coordinate-independent approach
to simultaneous treatment of actions and spaces on the same level
within the same general framework. For references and further material
within this general context, see for example
\cite{ArchbordSpielberg,BR1,BR2,Dav,EF1,EF2,EF3,Gl1,Gl2,KTW,LiBingRen,Ped-book,Sak,STom4,
  svesildej0,SSD20,SSJ-1,svetom0,svetom20,Tak,Tom1,Tom2}.

The other framework is based on more direct analysis of operators
representing specific choices of generators for the algebras. This is
a more constructive "non-commutative coordinates" approach, as the
choice of generators can be viewed as a choice of non-commutative
coordinates.  This framework is often used in physics and engineering
models. The convenient choice of the generators (non-commutative
coordinates) as in any coordinate approach is a key to success of
further analysis. Typically, the generators satisfy some defining
commutation rules used then when multiplying various expressions and
functions of the generators. Choices of generators influence the form
and complexity of the corresponding commutation rules, with the best
choice of generators is often precisely that which makes dynamics or
actions appear explicitly when generators are intertwined in
computations using the commutation rules. The possibility to choose
such generators often means that the algebra itself might be presented
as some kind of generalized crossed product of another algebra by the
corresponding action or perhaps a quotient of such crossed
products. That shows the interplay and broad use of dynamics and
actions for construction, classification and properties of the
corresponding operators satisfying the commutation relations for the
generators.  Precisely as the coordinate approach is used in almost
any explicit applications and computational modeling in classical
mechanics and engineering problems, the non-commutative coordinates
approach based on generators and commutation relations is used
throughout quantum physics. Moreover, it is also used even in
classical mechanics and engineering for example in connection to
symmetry analysis of differential and difference equations.  This
generators and relations framework, while being slightly less general
than the coordinate-independent approach of working with
$C^*$-algebras and their representations, is more advantageous in
another important respect. Operators may satisfy commutation relations
in one or another sense without being bounded.  Such unbounded
families of operators might not be extendable to a representation of
the algebra.  Moreover, for unbounded operators typically (due to for
example Hellinger-Toeplitz Theorem from Functional analysis) the
domains of definitions are not the whole space, which might lead to
impossibilities to compose or take linear combinations of such
operators to form an image of a representation of the algebra.  This
is why operators satisfying commutation relations in one or another
sense are called representations of the commutation relations rather
then representations of the algebra generated by the generators and
relations.  The problem of extendibility of representations of
commutation relations to a representation of the algebra is then
considered for various classes of commutation relations, leading to
interesting and unexpected results and examples requiring development
of suitable function analytical and analytical methods. For some
material and further references and on interplay of crossed product
type algebras and dynamics and actions within the generators and
relations based framework we recommend for for further reading
\cite{OS1, OstSam-book, Sa, SaV, S-PhD, SW-1, VSa}.

The third framework is based on pure algebra and is closely related
algebraically to (coordinates independent) framework of crossed
product $C^*$-algebras and $W^*$-algebras, but typically not taking
into proper consideration norm or metric structures and thus often
excluding proper and complete study or even a possibility of
classifications or proper description of infinite-dimensional
representations. On the other hand, in the algebraic study of
representations of algebraic crossed product algebras, substantial
work has been done on general representations of the algebra which are
not necessarily $*$-representations. Also, while in approaches based
on $C^*$-algebras and $W^*$-algebras, by definition, algebras and as a
result also their representations are over complex or real numbers,
for algebraic crossed products all other kinds of fields are being
considered. For references and further material in this purely
algebraic context see for example \cite{K, NastVObook, NVO, Passman,
  johan1, johan2, johan3, johan4, johan5}.

In this article we will work within the second framework, as the
algebras we will consider are naturally defined by generators and
relations of a certain type closely linked to the action of general
affine transformations in two dimensions (see Definition
\ref{def:A_L}).  We establish close connection between these crossed
product-like algebras and algebras that arise from a quantization
procedure of Poisson brackets associated to a general class of
algebraic surfaces (see Definition \ref{def:CLa}, Proposition
\ref{prop:Casimir} and Section \ref{sec:SurfaceRelation}).  We will
mostly work in this article with finite-dimensional representations,
and also describe some classes of infinite-dimensional
representations. The algebras we consider are closely related to
crossed product algebras of the algebra of functions in two commuting
variables by the action of an additive group of integers or a
semi-group of non-negative integers via composition of a function with
powers of the affine transformation applied to the two-dimensional
vector of variables (see Remark \ref{rem:ALcrossedpr} and Proposition
\ref{ALfunctreordering}).  Therefore, there exists a strong interplay
between representations and especially $*$-representations of these
commutation relations and dynamics of the affine transformation (see
Sections \ref{sec:finitedimrepsCLa} and \ref{sec:repsAL}). Especially
the orbits play important role for all finite-dimensional
representations and also for some classes of infinite-dimensional
representations as these representations can be described explicitly
in terms of orbits or parts of orbits. Another way of expressing this
and the structure of representations is using graphs. In this paper,
representations of the algebras connected to affine transformation and
their structure is studied using both the orbits and the graphs of
iterations of the affine transformation.

One of our main goals in this paper is to establish and investigate
the interplay of representations of these parametric families of
commutation relations and algebras with the geometry of the
corresponding parametric families of algebraic surfaces.  In Sections
\ref{sec:SurfaceRelation} and \ref{sec:repsurf} we investigate what
happens with representations when a change in the parameters results
in a change of properties of the corresponding surface; e.g. from
compact to non-compact, from genus 0 to genus 1, changes in the number
connected components etc. These and various other aspects of the
interplay between geometry and representations are studied in detail.

\section{Two algebras related to an affine map}

\noindent Let us define an affine map $L:\reals^2\to\reals^2$ by
\begin{equation}\label{eq:Laffinemap}
  L(\xv) = A\xv + \uv\equiv
  \begin{pmatrix}
    \alpha & \beta \\ 
    \gamma & \delta 
  \end{pmatrix}
  \begin{pmatrix}
    x \\ y
  \end{pmatrix}
  +
  \begin{pmatrix}
    u \\ v
  \end{pmatrix},
\end{equation}
with $\alpha,\beta,\gamma,\delta,u,v\in\reals$. To every such affine
map we will associate two algebras.

\begin{definition} \label{def:A_L}
  Let $\complex\langle S,T,E,\Et\rangle$ be a free associative algebra
  on four letters over the complex numbers. Let $I$ be the two-sided
  ideal generated by the relations
  \begin{align}
    &\alpha ES+\beta\Et S+uS-SE=0\label{eq:AL1}\\
    &\gamma ES+\delta\Et S + vS -S\Et=0\label{eq:AL2}\\
    &\alpha TE+\beta T\Et + uT-ET=0\label{eq:AL3}\\
    &\gamma TE+\delta T\Et + vT -\Et T=0\label{eq:AL4}\\
    &E\Et-\Et E = 0,\label{eq:AL5}
  \end{align}
  where $\alpha,\beta,\gamma,\delta,u,v\in\reals$. We define
  $\A_L$ to be the quotient algebra $\complex\langle
  S,T,E,\Et\rangle\slash I$. We can also consider $\A_L$ to be a
  $\ast$-algebra by defining $S^\ast=T$, $T^\ast=S$, $E^\ast=E$ and
  $\Et^\ast=\Et$, since the set of relations \eqref{eq:AL1}--\eqref{eq:AL5}
  is invariant under this operation.
\end{definition}

\begin{remark} \label{rem:ALcrossedpr}
Note that the defining relations
\eqref{eq:AL1},\eqref{eq:AL2},\eqref{eq:AL3},\eqref{eq:AL4} and \eqref{eq:AL5}
of the algebras $\A_L$ can be written in the following form when rewritten using block matrix notation
\begin{align*}
\left(\begin{array}{cc} S & 0 \\ 0 & S \end{array}\right)
 \left(\begin{array}{c}E \\ \Et \end{array}\right) &=
 L\left(\left(\begin{array}{c}E \\ \Et \end{array}\right)\right)
 \left(\begin{array}{c} S \end{array}\right)  \\
 \left(\begin{array}{c}E \\ \Et \end{array}\right)
 \left(\begin{array}{c} T  \end{array}\right) &=
 \left(\begin{array}{cc} T & 0 \\ 0 & T \end{array}\right)
 L\left(\left(\begin{array}{c}E \\ \Et \end{array}\right)\right) \\
 E\Et-\Et E &= 0.
\end{align*}
This way of writing the relations indicates a close connection of
the algebra $\A_L$ to crossed product type algebras and hence interplay with dynamics of iterations of the
algebra $\A_L$ (see Proposition \ref{ALfunctreordering}).
\end{remark}

\begin{definition}\label{def:CLa}
  Let $\complex\angles{W,V}$ be a free associative algebra on two
  letters over the complex numbers, and let $L$ be the affine map on
  $\reals^2$ defined by $L\xv=A\xv+\uv$. For any $a\in\reals$, let
  $I_a$ be the two-sided ideal generated by the relations
  \begin{align}
    &W^2V = aW-(\det A)\,VW^2+(\tr A)\,WVW\label{eq:Cdef1}\\
    &WV^2 = aV-(\det A)\,V^2W+(\tr A)\,VWV\label{eq:Cdef2}.
  \end{align}
  We then define $\CLa$ to be the quotient algebra
  $\complex\angles{W,V}\slash I_a$. We can also consider $\CLa$ to be a
  $\ast$-algebra by defining $W^\ast=V$ and $V^\ast=W$, since the set
  of relations \eqref{eq:Cdef1}--\eqref{eq:Cdef2} is invariant under
  this operation.
\end{definition}

\noindent In order to relate these algebras, we want to construct a
homeomorphism $\psi$ from $\A_L$ to $\C_L$, by setting
\begin{align*}
  &\psi(S) = W;\quad \psi(T)=V\\
  &\psi(E) = k\mid + mWV + nVW\\
  &\psi(\Et ) = \kt\mid + \mt WV + \nt VW.
\end{align*}
To obtain a homeomorphism, we must require that elements that are
equivalent to $0$ in $\A_L$ are mapped to elements equivalent to $0$
in $\CLa$. This requirement gives rise to the following system of equations
\begin{align*}
  &\Amatrix
  \twomatrix{m}{n}{\mt}{\nt}-
  \twomatrix{m}{n}{\mt}{\nt}
  \twomatrix{\tr A}{-\det A}{1}{0}=0\\
  &\bracketc{\Amatrix-\mid_2}
  \twovec{k}{\kt} = 
  \twomatrix{m}{n}{\mt}{\nt}\twovec{a}{0}-\twovec{u}{v}.
\end{align*}
General solutions to this system of equations are given in Appendix A,
but whenever $a\neq 0$, a particularly simple solution is given by
\begin{align*}
  &\psi(E) = \frac{1}{a}\bracketb{uWV + \paraa{\beta v-\delta u}VW}\\
  &\psi(\Et ) = \frac{1}{a}\bracketb{vWV + \paraa{\gamma u-\alpha v}VW}.  
\end{align*}
 The fact that
$\psi([E,\Et])=0$ is guaranteed by the following proposition.
\begin{proposition}\label{prop:DDtcommute}
  In $\CLa$ it holds that $[WV,VW]=0$.
\end{proposition}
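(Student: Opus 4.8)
The plan is to expand the commutator directly and reduce both of its terms to a common normal form using the two defining relations; no deeper structure is needed. First I would rewrite
\[
  [WV,VW] = WV\cdot VW - VW\cdot WV = WV^2W - VW^2V,
\]
so that the assertion $[WV,VW]=0$ becomes the single identity $WV^2W = VW^2V$ in $\CLa$. The strategy is then to compute each of these two words in a canonical way, using exactly one relation for each.

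Next I would reduce each side by a one-step rewrite. Applying relation \eqref{eq:Cdef2} to the subword $WV^2$ and then multiplying on the right by $W$ yields
\[
  WV^2W = aVW - (\det A)\,V^2W^2 + (\tr A)\,VWVW.
\]
Symmetrically, applying relation \eqref{eq:Cdef1} to the subword $W^2V$ and then multiplying on the left by $V$ yields
\[
  VW^2V = aVW - (\det A)\,V^2W^2 + (\tr A)\,VWVW.
\]
The crucial observation is that the two right-hand sides are literally the same word, term by term. Hence $WV^2W = VW^2V$, which is precisely the claim.

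I do not expect a genuine obstacle here: the entire content is this pair of parallel reductions, and the only thing to verify is the bookkeeping that the two relations are mirror images of one another under the interchange $W\leftrightarrow V$ (which is also the $\ast$-operation of Definition \ref{def:CLa}). This symmetry is exactly what forces the two computations to land on the identical normal form $aVW - (\det A)\,V^2W^2 + (\tr A)\,VWVW$, and it is worth stating explicitly since it is the reason the identity holds rather than an accident of the particular coefficients. With that remark in place, the proof is immediate.
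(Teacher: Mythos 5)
Your proof is correct and is essentially the paper's own argument: the paper likewise multiplies \eqref{eq:Cdef1} by $V$ from the left and \eqref{eq:Cdef2} by $W$ from the right and observes that both products equal the same expression, giving $VW^2V=WV^2W$. You have simply written out the common right-hand side explicitly.
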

\begin{proof}
  Multiplying \eqref{eq:Cdef1} by $V$ from the left and
  \eqref{eq:Cdef2} by $W$ from the right gives $VW^2V=WV^2W$,
  i.e. $[WV,VW]=0$.
\end{proof}
\noindent The map $\psi$ will in general not be an isomorphism since, e.g., the
element $E-k\mid-mST-nTS$ (which is non-zero in $\A_L$) is mapped to
$0$ in $\CLa$.

\section{The center of $\A_L$ and $\CLa$}

\noindent Let $\PEEt$ denote the subalgebra of $\A_L$ generated by $E$
nd $\Et$, and let $\PDDt$ denote the subalgebra of $\CLa$ generated by
$D=WV$ and $\Dt=VW$. In this section we will gather a couple of
results that concern central elements in $\A_L$ and $\CLa$.

\begin{proposition}\label{ALfunctreordering}
  For any $p\in\PEEt$ it holds that
  \begin{align*}
    &S^np(E,\Et) = p\paraa{L^n(E,\Et)}S^n\\
    &p(E,\Et)T^n = T^np\paraa{L^n(E,\Et)}
  \end{align*}
  where $L(x,y)=(\alpha x+\beta y+u,\gamma x+\delta y+v)$.
\end{proposition}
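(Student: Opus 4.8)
The plan is to reduce the statement to a single reordering rule for one copy of $S$ (respectively $T$) moved past the commuting generators $E,\Et$, and then to iterate. First I would rewrite the defining relations \eqref{eq:AL1}--\eqref{eq:AL2} in the solved form
\begin{align*}
  SE &= \paraa{\alpha E+\beta\Et+u}S, & S\Et &= \paraa{\gamma E+\delta\Et+v}S,
\end{align*}
so that, writing $p\circ L^n$ as shorthand for $p\paraa{L^n(E,\Et)}$, each generator satisfies $SE=\paraa{E\circ L}S$ and $S\Et=\paraa{\Et\circ L}S$. The structural point on which everything rests is relation \eqref{eq:AL5}: because $E$ and $\Et$ commute, $\PEEt$ is a commutative polynomial algebra in two variables, and the affine map $L$ induces a genuine algebra endomorphism $p\mapsto p\circ L$ of $\PEEt$ into itself. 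This closure property — that substituting $L(E,\Et)$ into a polynomial again produces an element of $\PEEt$ — is exactly what will allow the induction to close.

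Next I would establish the case $n=1$, namely $Sp=\paraa{p\circ L}S$ for every $p\in\PEEt$, by a subalgebra argument rather than a monomial-by-monomial computation. Let $\mathcal{B}$ be the set of $p\in\PEEt$ for which $Sp=\paraa{p\circ L}S$ holds. By the rewritten relations $\mathcal{B}$ contains $\mid$, $E$ and $\Et$; it is closed under linear combinations because $p\mapsto p\circ L$ is linear; and it is closed under products, since for $p,q\in\mathcal{B}$
\begin{align*}
  S(pq) &= (Sp)q = \paraa{p\circ L}Sq = \paraa{p\circ L}\paraa{q\circ L}S\\
  &= \paraa{(pq)\circ L}S,
\end{align*}
where the last equality is precisely the statement that $p\mapsto p\circ L$ is a ring homomorphism. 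Hence $\mathcal{B}$ is a subalgebra of $\PEEt$ containing its generators, so $\mathcal{B}=\PEEt$.

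Finally I would induct on $n$. Setting $\tilde p=p\circ L\in\PEEt$ and applying the inductive hypothesis to $\tilde p$ gives
\begin{align*}
  S^np &= S^{n-1}(Sp) = S^{n-1}\paraa{\tilde p\,S} = \paraa{S^{n-1}\tilde p}S\\
  &= \paraa{\tilde p\circ L^{n-1}}S^{n-1}S = \paraa{p\circ L^n}S^n,
\end{align*}
where the last step uses associativity of composition, $\paraa{p\circ L}\circ L^{n-1}=p\circ L^n$. The claim for $T$ is entirely dual: relations \eqref{eq:AL3}--\eqref{eq:AL4} give $ET=T\paraa{E\circ L}$ and $\Et T=T\paraa{\Et\circ L}$, and the same subalgebra and induction arguments then yield $pT^n=T^n\paraa{p\circ L^n}$. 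I expect no serious obstacle; the only point demanding care is the bookkeeping of the closure property $\tilde p\in\PEEt$, which is what guarantees that the inductive hypothesis is applicable at each stage and which ultimately depends on the commutativity relation \eqref{eq:AL5}.
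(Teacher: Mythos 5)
Your proof is correct, and it is the standard argument: the paper states this proposition without proof, treating it as a routine consequence of the solved relations $SE=(\alpha E+\beta\Et+u)S$, $S\Et=(\gamma E+\delta\Et+v)S$ together with \eqref{eq:AL5} and induction on $n$, which is exactly what you supply. Your explicit observation that commutativity of $E$ and $\Et$ is what makes $p\mapsto p\paraa{L(E,\Et)}$ an algebra homomorphism (so the subalgebra argument and the induction close) correctly isolates the one point that needs checking.
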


\begin{proposition}\label{prop:CLaCommutation}
  For any $p\in\PDDt$ it holds that
  \begin{align*}
    &W^np(D,\Dt) = p\paraa{\Lh^n(D,\Dt)}W^n\\
    &p(D,\Dt)V^n = V^np\paraa{\Lh^n(D,\Dt)}
  \end{align*}
  where $\Lh(x,y)=\paraa{(\tr A)x-(\det A)y+a,x}$.  
\end{proposition}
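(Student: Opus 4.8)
The plan is to mirror the proof of Proposition~\ref{ALfunctreordering}: first reduce everything to the two generators $D$ and $\Dt$, then bootstrap to an arbitrary $p\in\PDDt$ and to arbitrary $n$ by an algebra-homomorphism argument. The structural fact I would establish at the outset is that, by Proposition~\ref{prop:DDtcommute}, $[D,\Dt]=0$, so $\PDDt$ is commutative and $p(D,\Dt)$ is an unambiguous polynomial expression. Consequently the two entries of $\Lh(D,\Dt)=\paraa{(\tr A)D-(\det A)\Dt+a,\,D}$ both lie in $\PDDt$ and commute, so the assignment $p\mapsto p\paraa{\Lh(D,\Dt)}$ is a well-defined algebra endomorphism of $\PDDt$. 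This homomorphism property is the \emph{engine} of the whole argument.

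Next I would verify the claim for the generators and $n=1$. Using $D=WV$ and the defining relation~\eqref{eq:Cdef1},
\[
  WD=W^2V=aW-(\det A)\,VW^2+(\tr A)\,WVW=\paraa{(\tr A)D-(\det A)\Dt+a}W,
\]
after rewriting $VW^2=\Dt W$ and $WVW=DW$; and $W\Dt=W(VW)=(WV)W=DW$ is immediate. These two identities are exactly $Wp(D,\Dt)=p\paraa{\Lh(D,\Dt)}W$ for $p(x,y)=x$ and for $p(x,y)=y$. Symmetrically, relation~\eqref{eq:Cdef2} gives $DV=WV^2=V\paraa{(\tr A)D-(\det A)\Dt+a}$ and $\Dt V=V(WV)=VD$, which are the two generator cases of $p(D,\Dt)V=Vp\paraa{\Lh(D,\Dt)}$.

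I would then promote these to arbitrary $p$. The relation $Wp(D,\Dt)=p\paraa{\Lh(D,\Dt)}W$ holds for constants and for $D,\Dt$, is clearly additive, and is multiplicative because
\[
  W\,(pq)(D,\Dt)=p\paraa{\Lh(D,\Dt)}\,W\,q(D,\Dt)=p\paraa{\Lh(D,\Dt)}\,q\paraa{\Lh(D,\Dt)}\,W=(pq)\paraa{\Lh(D,\Dt)}\,W,
\]
the final step being precisely the homomorphism property noted above. Since $D$ and $\Dt$ generate $\PDDt$, this settles the $n=1$ case for every $p$; the $V$-version is identical. The passage to general $n$ is then an induction: assuming the statement for $n-1$ and applying it to the polynomial $p\circ\Lh$,
\[
  W^np(D,\Dt)=W^{n-1}\paraa{Wp(D,\Dt)}=W^{n-1}\,p\paraa{\Lh(D,\Dt)}\,W=p\paraa{\Lh^n(D,\Dt)}W^n,
\]
where the last equality uses $\Lh\circ\Lh^{n-1}=\Lh^n$; the $V^n$ case is obtained the same way. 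None of the individual computations is hard; the one point that genuinely needs care — and the reason Proposition~\ref{prop:DDtcommute} is indispensable — is that $p\paraa{\Lh(D,\Dt)}$ must be read inside the commutative algebra $\PDDt$. Only then is substitution multiplicative, so that the product step above is valid and the iterated map $\Lh^n$ can be formed without ordering ambiguities. I therefore expect the main (modest) obstacle to be organizing the bookkeeping of the composition $\Lh^n$ rather than any substantive difficulty.
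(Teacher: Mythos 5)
Your proof is correct. The paper states Proposition~\ref{prop:CLaCommutation} without giving a proof, and your argument --- verifying the identity for the generators $D=WV$ and $\Dt=VW$ directly from the defining relations \eqref{eq:Cdef1}--\eqref{eq:Cdef2}, extending to all of $\PDDt$ multiplicatively via the commutativity guaranteed by Proposition~\ref{prop:DDtcommute}, and then iterating to general $n$ --- is exactly the natural route, mirroring Proposition~\ref{ALfunctreordering} as you say.
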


\noindent From these propositions it is clear that any polynomial $p$,
satisfying $p(x,y)=p\paraa{\Lh(x,y)}$ and any polynomial $q$,
satisfying $q(x,y)=q\paraa{L(x,y)}$ generate central elements of $\CLa$
and $\A_L$ respectively. In particular, we have the following result
\begin{proposition}\label{prop:Casimir}
  Let $\Ch_{r,s,t}$ denote the following element in $\CLa$:
  \begin{align*}
    \Ch_{r,s,t}=r\paraa{D+\Dt}+s\paraa{D+\Dt}^2+t\paraa{D-\Dt}^2.
  \end{align*}
  Then $\Ch_{r,s,t}$ commutes with $W$ and $V$ if and only if we are
  in one of the following two situations:
  \begin{enumerate}
  \item $\det A=1$, which implies that
    \begin{align}\label{eq:Casimir}
      \Ch = -4a\paraa{D+\Dt}+(2-\tr A)\paraa{D+\Dt}^2+(2+\tr A)\paraa{D-\Dt}^2
    \end{align}
    commutes with $W$ and $V$;
  \item $\det A=-1$, $\tr A=0$ and $a=0$, in which case $\Ch_{r,s,t}$
    commutes with $W$ and $V$ for all $r,s,t\in\reals$.
  \end{enumerate}
\end{proposition}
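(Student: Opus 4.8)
The plan is to recognise $\Ch_{r,s,t}$ as a polynomial in the two \emph{commuting} elements $D=WV$ and $\Dt=VW$ — they commute by Proposition~\ref{prop:DDtcommute} — and to reduce the centrality condition to a scalar polynomial identity by means of Proposition~\ref{prop:CLaCommutation}. Writing $\Ch_{r,s,t}=p(D,\Dt)$ with
\[
  p(x,y)=r(x+y)+s(x+y)^2+t(x-y)^2,
\]
Proposition~\ref{prop:CLaCommutation} with $n=1$ gives $W\,p(D,\Dt)=p\paraa{\Lh(D,\Dt)}W$ and $p(D,\Dt)\,V=V\,p\paraa{\Lh(D,\Dt)}$. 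Hence $\Ch_{r,s,t}$ commutes with both $W$ and $V$ as soon as $p\paraa{\Lh(D,\Dt)}=p(D,\Dt)$, and in particular whenever the \emph{polynomial} identity $p\paraa{\Lh(x,y)}=p(x,y)$ holds in $\complex[x,y]$. This already settles the sufficiency direction once the correct $(r,s,t)$ are identified.

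For the converse I would show that centrality forces exactly this identity. From $\com{W,\Ch_{r,s,t}}=0$ one gets $qW=0$, where $q:=p\paraa{\Lh(D,\Dt)}-p(D,\Dt)\in\PDDt$; right-multiplying by $V$ and using $\com{D,\Dt}=0$ yields $Dq=0$. The decisive structural input is that $\PDDt$ is a polynomial algebra in the two commuting variables $D,\Dt$; granting this, $\PDDt$ is a domain (so $Dq=0$ forces $q=0$) and the degree-$\le2$ monomials $1,D,\Dt,D^2,D\Dt,\Dt^2$ are linearly independent, so the abstract condition is genuinely equivalent to the polynomial identity. Establishing this linear independence — provable from a normal-form/PBW description of $\CLa$, or by exhibiting representations in which these elements are independent — is the step I expect to be the main obstacle, since everything else is formal.

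It then remains to solve $p\paraa{\Lh(x,y)}=p(x,y)$ for $\Lh(x,y)=\paraa{(\tr A)x-(\det A)y+a,\;x}$. Writing $\tau=\tr A$ and $\Delta=\det A$ and substituting $(x,y)\mapsto(\tau x-\Delta y+a,\,x)$, I would expand $p\paraa{\Lh(x,y)}$ and equate the coefficients of $1,x,y,x^2,xy,y^2$ against those of $p(x,y)$. This produces a homogeneous linear system in $(r,s,t)$ whose coefficients depend on $\tau,\Delta,a$; crucially, the coefficient of $y^2$ gives the clean equation $(s+t)\paraa{\Delta^2-1}=0$, and this is what splits the whole analysis.

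Finally I would determine when this system admits a nonzero solution. The $y^2$-equation forces either $\Delta^2=1$ or $s+t=0$. Chasing the remaining equations in each case shows that a nontrivial central $\Ch_{r,s,t}$ exists precisely when $\det A=1$, or when $\det A=-1,\ \tr A=0,\ a=0$. In the first case the equations collapse to the two constraints $r=-a(s+t)$ and $\tau(s+t)+2(s-t)=0$, which cut out a one-parameter family; the displayed element, corresponding to $s+t=4$ and $s-t=-2\tau$, is a representative, and I would verify directly that it satisfies~\eqref{eq:Casimir}. In the second case all six equations degenerate to $0=0$, so every $\Ch_{r,s,t}$ is central. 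The leftover subcase $s+t=0$ with $\Delta^2\neq1$ is seen to force $r=s=t=0$, which closes the characterisation.
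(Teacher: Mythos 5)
Your reduction is exactly the route the paper intends: Proposition \ref{prop:Casimir} is stated without a proof environment, immediately after the observation that any $p\in\PDDt$ with $p=p\circ\Lh$ yields a central element via Proposition \ref{prop:CLaCommutation}, and your coefficient computation is correct. In particular the splitting equation $(s+t)(\det A^2-1)=0$ from the $y^2$-coefficient, the two surviving constraints $r=-a(s+t)$ and $(\tr A)(s+t)+2(s-t)=0$ when $\det A=1$ (a one-dimensional solution space containing $(r,s,t)=(-4a,2-\tr A,2+\tr A)$, i.e.\ \eqref{eq:Casimir}), the total degeneration when $\det A=-1$, $\tr A=0$, $a=0$, and the triviality of the residual branches all check out. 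So the sufficiency direction is complete and matches the paper.

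The one genuine gap is the step you yourself flag: passing from $q(D,\Dt)=0$ in $\CLa$ back to the identity $p\circ\Lh=p$ in $\complex[x,y]$ requires that $D$ and $\Dt$ be algebraically independent (at least through degree two) and that $\PDDt$ contain no zero divisors killing $D$; the paper nowhere establishes a normal form or PBW basis for $\CLa$, and for special parameter values the algebra could a priori collapse, making extra elements central. You cannot wave this away, but you can close it without a basis theorem: it suffices to exhibit, for each parameter choice, enough $\ast$-representations from Section \ref{sec:finitedimrepsCLa} in which $D$ and $\Dt$ are diagonal with eigenvalue pairs $\xv_i=\Lh^{i-1}(\xv_1)$, and to choose initial points $\xv_1$ so that the six functions $1,r,s,r^2,rs,s^2$ are linearly independent on the union of the resulting orbits (six points of $\Rtwopluszero$ in general position suffice, and $q=p\circ\Lh-p$ must then vanish identically). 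Since a nonzero degree-two $q$ vanishing on all such orbit sets is impossible for generic starting data, this yields the ``only if'' direction honestly. Alternatively, one can read the proposition, as the paper implicitly does, as a statement about which $(r,s,t)$ make $p$ an $\Lh$-invariant polynomial; under that reading your argument is already complete.
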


\section{Relation to noncommutative surfaces}\label{sec:SurfaceRelation}

\noindent In \cite{abhhs:noncommutative,a:repCalg} noncommutative
$C$-algebras of Riemann surfaces were constructed and a particular
case of spheres and tori was studied in detail. It turns out that the
classical transition from spherical to toroidal geometry corresponds
to a change in the representation theory of the noncommutative
algebras. This correspondence will later be described in detail. Let
us briefly recall how to obtain algebras from a given surface.

Let $C(x,y,z)$ be a polynomial and let $\Sigma=C^{-1}(0)$. One can
define a Poisson bracket on $\reals^3$ by setting
\begin{align}
  \pb{f,g} = \nabla C\cdot\paraa{\nabla f\times\nabla g},
\end{align}
for smooth functions $f,g$. This Poisson bracket induces a Poisson
bracket on $\Sigma$ by restriction. The idea is to start from the
coordinate relations 
\begin{align}
  &\pb{x,y} = \partial_zC\\
  &\pb{y,z} = \partial_xC\\
  &\pb{z,x} = \partial_yC
\end{align}
and then construct a noncommutative algebra on $X,Y,Z$ by imposing the
relations
\begin{align}
  &\com{X,Y}=i\hbar\Psi\paraa{\partial_z C}\\
  &\com{Y,Z}=i\hbar\Psi\paraa{\partial_x C}\\
  &\com{Z,X}=i\hbar\Psi\paraa{\partial_y C}
\end{align}
where $\Psi$ is an ordering map from polynomials in three variables to
noncommutative polynomials in $X$, $Y$ and $Z$. In case this algebra
is non-trivial, its representations will provide an approximating
sequence (in the sense of \cite{bhss:glinfinity}) for the Poisson
algebra of polynomial functions on $\Sigma$ as $\hbar\to 0$ (see
\cite{a:phdthesis} for details).

Let us now consider the following polynomial
\begin{align}
  C(x,y,z) = \frac{\alpha_0}{2}\paraa{x^2+y^2}+\frac{\alpha_1}{4}\paraa{x^2+y^2}^2
  +\half z^2-\half c_0,\label{eq:Cstd}
\end{align}
which, by using the above Poisson bracket, gives rise to
\begin{align}
  &\pb{x,y} = z\\
  &\pb{y,z} = \alpha_0x+\alpha_1x(x^2+y^2)\\
  &\pb{z,x} = \alpha_0y+\alpha_1y(x^2+y^2).
\end{align}
We will choose an ordering of the right hand sides in terms of the
complexified variables $W=X+iY$ and $V=X-iY$ (cp. \cite{a:repCalg})
\footnote{Note that in \cite{a:repCalg}, the parameter $\deltat_1$ is not
  present (although it is implicitly present in
  \cite{abhhs:noncommutative}, taking the value $1/2$). This is an additional freedom in the
  choice of ordering that can not be extended to higher order algebras
  without breaking the commutativity of $WV$ and $VW$.}
\begin{align*}
  &[X,Y] = i\hbar Z\\ 
  &[Y,Z] = i\hbar\alpha_0 X+\frac{i\hbar}{2}\bracketc{\bt_1\paraa{V^2W+VW^2}+
    \gt_1\paraa{VWV+WVW}+\deltat_1\paraa{WV^2+W^2V}}\\ 
  &[Z,X] = i\hbar\alpha_0 Y+\frac{i\hbar}{2i}\bracketc{\bt_1\paraa{VW^2-V^2W}+
    \gt_1\paraa{WVW-VWV}+\deltat_1\paraa{W^2V-WV^2}}
\end{align*}
for any choice of $\bt_1,\gt_1,\deltat_1$ such that
$\bt_1+\gt_1+\deltat_1=\alpha_1$. By eliminating $Z=[X,Y]/i\hbar$, one can
write the second two relations entirely in terms of $W$ and $V$
\begin{align*}
  &\paraa{1+2\hbar^2\deltat_1}W^2V = -2\alpha_0\hbar^2W - \paraa{2\hbar^2\bt_1+1}VW^2+\paraa{2-2\hbar^2\gt_1}WVW\\
  &\paraa{1+2\hbar^2\deltat_1}WV^2 = -2\alpha_0\hbar^2V - \paraa{2\hbar^2\bt_1+1}V^2W+\paraa{2-2\hbar^2\gt_1}VWV.
\end{align*}
This algebra is isomorphic to $\CLa$ if
\begin{align*}
  a=-\frac{2\alpha_0\hbar^2}{1+2\hbar^2\deltat_1}  
\end{align*}
and $L$ is an affine map such that
\begin{align*}
  \det A = \frac{1+2\hbar^2\bt_1}{1+2\hbar^2\deltat_1}
  \qquad\tr A = \frac{2-2\hbar^2\gt_1}{1+2\hbar^2\deltat_1}.
\end{align*}
Hence, the relation to the original parameters of the polynomial is
\begin{align*}
  \alpha_0 = -a\frac{1+2\hbar^2\deltat_1}{2\hbar^2}\qquad
  \alpha_1 = \Delta\frac{1+2\hbar^2\deltat_1}{2\hbar^2}
\end{align*}
where $\Delta = 1+\det A - \tr A$.

Let us study the Casimir $\Ch$, defined in \eqref{eq:Casimir} when
$\det A=1$, by writing it in terms of $X$, $Y$ and $Z$. Since
$D+\Dt=2\paraa{X^2+Y^2}$ and $D-\Dt=2\hbar Z$, we obtain
\begin{align*}
  \Ch = -8a\paraa{X^2+Y^2}
  +4\paraa{2-\tr A}\paraa{X^2+Y^2}^2+4\hbar^2\paraa{2+\tr A}Z^2. 
\end{align*}
When the algebra $\CLa$ arises from a surface, we can express $\tr A$
in terms of $\alpha_0,\alpha_1,\bt_1,\gt_1,\deltat_1$ to obtain
\begin{align*}
  \frac{1+2\hbar^2\deltat_1}{16\hbar^2}\Ch = \alpha_0\paraa{X^2+Y^2}
  +\frac{\alpha_1}{2}\paraa{X^2+Y^2}^2
  +\parab{1+2\hbar^2\deltat_1-\half\hbar^2\alpha_1}Z^2. 
\end{align*}
In this way we see that the Casimir $\Ch$ is a noncommutative analogue
of the embedding polynomial $C(x,y,z)$. In any irreducible
representation $\phi$, the element $\phi(\Ch)$ will be proportional to
the identity. Let us define two constants $\ch_0$ and $\ch_1$ through the following
relations:
\begin{align*}
  \phi(\Ch) = 4\ch_1\mid
   \qquad\text{and}\qquad 
   \ch_0 = \frac{1+2\hbar^2\deltat_1}{4\hbar^2}\ch_1.
\end{align*}
In the procedure of constructing noncommutative algebras from a given
polynomial, information about the constant $c_0$ is lost since the
construction only depends on partial derivatives of $C(x,y,z)$. As we
will see, different values of $c_0$ correspond to, for instance,
different topologies of the surface, and this rises a problem if
we want to study geometry in the algebraic setting. However, since
(when $\det A=1$) the central element $\Ch$ is a noncommutative analogue
of the polynomial $C(x,y,z)$, we will identify $c_0$ and $\ch_0$ in an
irreducible representation; this gives us a way to determine the
``topology'' of a representation.

In the following we will compare the geometry of the surface, for all
values of $\alpha_0,\alpha_1,c_0$, with the representation theory for
the corresponding irreducible representations of $\CLa$ when
$\ch_0=c_0$.

\section{$\ast$-representations of $\CLa$}\label{sec:finitedimrepsCLa}

\noindent From the viewpoint of noncommutative surfaces, one is
interested in representations in which $X,Y$ and $Z$ are self-adjoint
operators. This requirement is transferred to $\CLa$ by considering
$\ast$-representations.  By a $\ast$-representation we mean a
representation $\phi$ such that $\phi(A^*)=\phi(A)^\dagger$. Clearly,
writing $W=X+iY$ and $V=X-iY$, for hermitian matrices $X,Y$, implies
that $W^\dagger=V$. 

The ($\ast$-)representation theory of $\CLa$ was worked out in
\cite{a:repCalg}, but let us recall some details in the
construction. Let us for simplicity denote $\phi(W)$ by $W$ and
$\phi(V)$ by $V$ in a finite dimensional $\ast$-representation of
$\CLa$. By Proposition \ref{prop:DDtcommute} the matrices $D=W\Wd$ and
$\Dt=\Wd W$ will be two commuting hermitian matrices, and therefore
they can always be simultaneously diagonalized by a unitary
matrix. Let us assume such a basis to be chosen and write
\begin{align*}
  &D = \diag(d_1,d_2,\ldots,d_n)\\
  &\Dt = \diag(\dt_1,\dt_2,\ldots,\dt_n).
\end{align*}
In components, the defining relations of $\CLa$ (together with the
associativity condition $DW=W\Dt$) can then be written as
\begin{align*}
  &W_{ij}\parab{\paraa{\tr A}d_i-\paraa{\det A}\dt_i+a-d_j}=0\\
  &W_{ij}\paraa{d_i-\dt_j}=0,
\end{align*}
Thus, either $W_{ij}=0$ or
\begin{align*}
  d_j &= \paraa{\tr A}d_i-\paraa{\det A}\dt_i+a\\
  \dt_j &= d_i.
\end{align*}
By introducing the notation $\xv_i=(d_i,\dt_i)$ and the affine map $\Lh$
\begin{align*}
  \Lh\twovec{x}{y}=\twomatrix{\tr A}{-\det A}{1}{0}\twovec{x}{y}+\twovec{a}{0},
\end{align*}
we can write this relation as $\xv_j=\Lh(\xv_i)$ whenever $W_{ij}\neq
0$. Let us now show how the representation theory can be described as
a dynamical system generated by $\Lh$ acting on a directed graph.

Let $G_W=(V,E)$ be the directed graph of $W$, i.e. the graph on $n$
vertices with vertex set $V=\{1,2,\ldots,n\}$ and edge set $E\subseteq
V\times V$, such that
\begin{align*}
  (i,j)\in E\equivalent W_{ij}\neq 0.
\end{align*}
By assigning the vector $\xv_i$ to the vertex $i$, it follows that for
a graph corresponding to the matrix $W$ in a representation of $\CLa$,
it holds that $\xv_j=\Lh(\xv_i)$ whenever there is an edge from $i$ to
$j$. The dynamical system on the graph can therefore be depicted as in Figure \ref{fig:repgraph}.

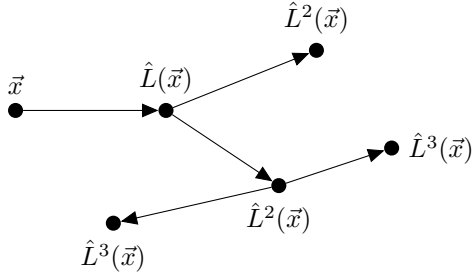
\begin{figure}[h]\label{fig:repgraph}
\begin{center}
  \begin{tikzpicture}[>=triangle 45]
    \vertex(node1)(0,0)
    \vertex(node2)(2,0)
    \vertex(node3)(4,0.8)
    \vertex(node4)(3.5,-1)
    \vertex(node5)(1.3,-1.5)
    \vertex(node6)(5,-0.5)
    \diredge(node1)(node2)
    \diredge(node2)(node3)
    \diredge(node2)(node4)
    \diredge(node4)(node5)
    \diredge(node4)(node6)
    \vtextn(node1){$\xv$}
    \vtextn(node2){$\Lh(\xv)$}
    \vtextn(node3){$\Lh^2(\xv)$}
    \vtexts(node4){$\Lh^2(\xv)$}
    \vtexts(node5){$\Lh^3(\xv)$}
    \vtexte(node6){$\Lh^3(\xv)$}
  \end{tikzpicture}
\end{center}
\caption{The affine map $\Lh$ acting on the directed graph of a representation.}
\end{figure}

\noindent One immediate observation is that if the graph has a
``loop'' (i.e. a directed cycle) on $k$ vertices, then the affine map
must have a periodic orbit of order $k$. If the affine map does not
have any periodic points, then loops are excluded from all representation
graphs. It is a trivial fact that any finite directed graph must have
at least one loop or at least one ``string'', i.e. a directed path
from a transmitter to a receiver. Hence, if the graph can not have a
loop, it must contain a string. Due to the fact that $D=W\Wd$ and
$\Dt=\Wd W$, one gets the following condition for vertices being
transmitters or receivers.
\begin{lemma}[\cite{abhhs:noncommutative}]\label{lemma:transrec}
  The vertex $i$ is a transmitter if and only if $\dt_i=0$. The
  vertex $i$ is a receiver if and only if $d_i=0$.
\end{lemma}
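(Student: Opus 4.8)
The plan is to translate the notions of transmitter and receiver into the componentwise relations already derived for a $\ast$-representation, and then read off the stated conditions. Recall that in the chosen basis $D=W\Wd=\diag(d_1,\ldots,d_n)$ and $\Dt=\Wd W=\diag(\dt_1,\ldots,\dt_n)$, and that $W^\dagger=V$. The key identities I would use are the diagonal entries of these two products: since $D=WV$ we have $d_i=(WV)_{ii}=\sum_k W_{ik}V_{ki}=\sum_k W_{ik}\overline{W_{ik}}=\sum_k\abs{W_{ik}}^2$, and since $\Dt=VW$ we have $\dt_i=(VW)_{ii}=\sum_k V_{ik}W_{ki}=\sum_k\overline{W_{ki}}\,W_{ki}=\sum_k\abs{W_{ki}}^2$. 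Here I use $V_{ki}=(W^\dagger)_{ki}=\overline{W_{ik}}$ throughout.

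First I would recall the graph-theoretic definitions: vertex $i$ is a \emph{transmitter} if it has outgoing edges but no incoming edges, and a \emph{receiver} if it has incoming edges but no outgoing ones. In terms of $W$, an incoming edge at $i$ means some $(k,i)\in E$, i.e. $W_{ki}\neq 0$ for some $k$; an outgoing edge at $i$ means $W_{ik}\neq 0$ for some $k$. Then I would combine this with the two diagonal sums above. From $\dt_i=\sum_k\abs{W_{ki}}^2$, we see $\dt_i=0$ if and only if $W_{ki}=0$ for all $k$, i.e. if and only if $i$ has no incoming edges. Likewise, from $d_i=\sum_k\abs{W_{ik}}^2$, we have $d_i=0$ if and only if $W_{ik}=0$ for all $k$, i.e. if and only if $i$ has no outgoing edges.

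It remains to match these with the transmitter/receiver definitions. A transmitter has no incoming edges, which by the first computation is exactly $\dt_i=0$; a receiver has no outgoing edges, which by the second computation is exactly $d_i=0$. This gives both biconditionals and completes the argument. The only subtlety — and the one place I would be careful — is the convention on whether transmitters and receivers must additionally carry at least one edge of the complementary type (so that isolated vertices are excluded). If the definition requires a transmitter to actually transmit (have at least one outgoing edge) and a receiver to actually receive, then an isolated vertex $i$ with $d_i=\dt_i=0$ would satisfy neither condition, and one would need to note that such a vertex corresponds to the zero row and column of $W$, which may be discarded from an irreducible representation. Under the convention used here, however, the positivity of the sums of squares makes the equivalences immediate, so the main obstacle is purely definitional rather than computational.
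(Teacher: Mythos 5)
Your proof is correct and follows exactly the route the paper intends: the lemma is quoted from \cite{abhhs:noncommutative} without a written-out proof, but the sentence preceding it (``Due to the fact that $D=W\Wd$ and $\Dt=\Wd W$, one gets the following condition\dots'') points to precisely your computation of the diagonal entries $d_i=\sum_k|W_{ik}|^2$ and $\dt_i=\sum_k|W_{ki}|^2$ as sums of squares of the $i$-th row and column of $W$. Your closing caveat about isolated vertices is a sensible definitional remark but does not affect the substance of the argument.
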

\noindent Thus, for a string on $k$ vertices to exist, there must
exist a vector $\xv=(d,0)$ such that $\Lh^{k-1}(\xv)=(0,\dt)$. We call
this a $k$-string of the affine map $\Lh$. We also note that since the
matrices $D$ and $\Dt$ are non-negative, all vectors $\{\xv_i\}$ must
lie in $\Rtwopluszero=\{(x,y)\in\reals^2: x,y\geq 0\}$. The natural question
is now: Which graphs correspond to irreducible representations of
$\CLa$? The answer lies in the following theorem.

\begin{theorem}[\cite{a:repCalg}]\label{thm:CLaRep}
  Let $\phi$ be a locally injective $\ast$-representation of
  $\CLa$. Then $\phi$ is unitarily equivalent to a representation in
  which $\phi(WV)$ and $\phi(VW)$ are diagonal and the directed graph
  of $\phi(W)$ is a direct sum of strings and loops. A representation
  corresponding to a single string or a single loop is irreducible.
\end{theorem}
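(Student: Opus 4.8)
The plan is to prove Theorem~\ref{thm:CLaRep}, which characterizes locally injective $\ast$-representations of $\CLa$ in terms of directed graphs built from strings and loops, and asserts that single strings and single loops give irreducible representations.

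\medskip

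\noindent\textbf{Setting up the graph structure.}
First I would recall the reduction already set up in the text: in a finite-dimensional $\ast$-representation, $D=W\Wd$ and $\Dt=\Wd W$ are commuting hermitian (in fact non-negative) matrices, so they may be simultaneously diagonalized by a unitary change of basis. After this unitary conjugation---which does not change the equivalence class of the representation---we have $D=\diag(d_1,\ldots,d_n)$, $\Dt=\diag(\dt_1,\ldots,\dt_n)$, and the relations force that for each pair $(i,j)$ either $W_{ij}=0$ or $\xv_j=\Lh(\xv_i)$, where $\xv_i=(d_i,\dt_i)$. The directed graph $G_W$ encodes exactly which entries $W_{ij}$ are nonzero. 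The hypothesis of \emph{local injectivity} should be used here to guarantee that each vertex has at most one outgoing and at most one incoming edge: since $\xv_j=\Lh(\xv_i)$ is determined by $\xv_i$, two outgoing edges from $i$ would land on vertices carrying the same label $\Lh(\xv_i)$, and local injectivity is precisely what rules out the associated degeneracy (two distinct nonzero entries in the same row or column of $W$ corresponding to a repeated image). With in-degree and out-degree at most one at every vertex, the graph is automatically a disjoint union of simple directed paths (strings) and simple directed cycles (loops).

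\medskip

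\noindent\textbf{Identifying strings and loops as subrepresentations.}
Next I would argue that each connected component of $G_W$ spans an invariant subspace. Because $W$ sends the basis vector $e_i$ to a multiple of $e_j$ exactly when $(i,j)\in E$, and $\Wd=W^\ast$ reverses these edges, the span of the basis vectors in a single connected component is invariant under both $W$ and $\Wd$, hence under the whole representation. Thus $\phi$ decomposes as a direct sum over components, establishing the ``direct sum of strings and loops'' claim. The combinatorial fact cited in the text---that any finite directed graph with in/out-degree at most one is a union of strings and loops---together with Lemma~\ref{lemma:transrec} (transmitters have $\dt_i=0$, receivers have $d_i=0$) pins down the endpoints of each string.

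\medskip

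\noindent\textbf{Irreducibility of a single string or loop.}
Finally I would prove that a representation whose graph is a single string or a single loop is irreducible. The idea is to show there is no nontrivial subspace invariant under both $W$ and $\Wd$. On a single string or loop the combined action of $W$ and $\Wd$ is transitive on the set of basis vectors: starting from any $e_i$, repeated application of $W$ and $\Wd$ reaches every other $e_j$ in the component, since the underlying undirected graph is connected (a path or a cycle). Any invariant subspace $U$ containing a nonzero vector must, after projecting onto the eigenbasis of the commuting diagonal operators $D,\Dt$, contain some basis vector $e_i$ (here one must check the $\xv_i$ labels along a string or loop are distinct so that $D$ or $\Dt$ separates the basis vectors); then transitivity forces $U$ to contain all $e_j$, so $U$ is the whole space. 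I expect the \emph{main obstacle} to be this last point: verifying that the eigenvalue labels $\xv_i$ are genuinely distinct along a loop or string so that the diagonal operators separate the basis and an invariant subspace is forced to be coordinate-aligned. For a string this follows from injectivity of $\Lh$ along the orbit, but for a loop one must argue that a periodic orbit of $\Lh$ of exact period $k$ has $k$ distinct points---otherwise the loop would collapse---and handle the degenerate cases where the representation could fail to be irreducible. This is where the detailed dynamics of the affine map $\Lh$ enters, and it is the step I would spend the most care on.
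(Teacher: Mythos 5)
First, a remark on context: the paper does not prove Theorem~\ref{thm:CLaRep} at all --- it is imported from \cite{a:repCalg} --- so your proposal has to be judged on its own merits rather than against an in-text argument.

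There is a genuine gap in your first step. You claim that local injectivity forces every vertex of $G_W$ to have in-degree and out-degree at most one, so that the graph is \emph{already} a union of strings and loops in the diagonalizing basis. This is false, and the paper's own Figure~1 (a representation graph with a vertex having two outgoing edges, both targets labelled $\Lh^2(\xv)$) already signals why: local injectivity only says that $\Lh$ is injective on the set of labels $\{\xv_1,\dots,\xv_n\}$; it does not prevent several vertices from carrying the \emph{same} label, and branching happens precisely between such vertices. Concretely, take two identical string (or loop) representations, form their direct sum, and conjugate by a unitary that mixes the two copies of each level $V_i$; the result is still a locally injective $\ast$-representation with diagonal $D,\Dt$, but its graph branches at every vertex. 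This is exactly why the theorem asserts only that $\phi$ is \emph{unitarily equivalent} to a string-and-loop representation --- that clause carries the real content, and your argument never produces the required unitary. The correct route is the one used for $\A_L$ in Section~6: local injectivity yields the block-cyclic form \eqref{eq:SRep}, with $W_i\colon V_{i+1}\to V_i$ the only nonzero blocks; then the extra information special to $\CLa$ --- that $D=W\Wd$ and $\Dt=\Wd W$ are \emph{scalar} on each block $V_i$ --- gives $W_iW_i^\dagger=d_i\mid_{|V_i|}$ and $W_i^\dagger W_i=\dt_{i+1}\mid_{|V_{i+1}|}=d_i\mid_{|V_{i+1}|}$, forcing $|V_i|=|V_{i+1}|$ and $W_i=\sqrt{d_i}\,U_i$ with $U_i$ unitary whenever $d_i>0$ (and $W_i=0$ when $d_i=0$). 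A further block-diagonal unitary change of basis, which preserves the diagonality of $D$ and $\Dt$, absorbs the $U_i$ and turns each $W_i$ into $\sqrt{d_i}\,\mid$; only \emph{then} does the graph decompose into parallel strings and loops. This polar-decomposition step is the missing idea.

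Your treatment of irreducibility is essentially sound in outline (connectivity of the underlying undirected graph plus the fact that $D$ and $\Dt$ lie in the image of the representation), and you are right to flag distinctness of the labels as the delicate point. Two comments there: for a string you can sidestep distinctness entirely, since $\ker\Dt$ is spanned by the single transmitter basis vector $e_1$ (Lemma~\ref{lemma:transrec}), the spectral projection onto $\ker\Dt$ is a polynomial in $\Dt$, and hence any nonzero invariant subspace contains $e_1$ and then, by applying $\Wd$ repeatedly, all of the $e_j$. For a loop, distinctness is not automatic and genuinely matters: if the orbit degenerates to a fixed point $(d,d)$, the ``single loop'' $\phi(W)=\sqrt{d}\,P$ with $P$ a cyclic permutation is normal and hence \emph{reducible}, so the final sentence of the theorem must be read with the loop corresponding to an orbit of exact period equal to its length --- and under local injectivity the decomposition above automatically produces loops with pairwise distinct labels. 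Making that last point explicit would close your argument.
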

 
\begin{remark}\label{rem:locallyInjective}
  A representation is locally injective if $\Lh$ is injective on the
  set $\{\xv_1,\xv_2,\ldots,\xv_n\}$. A representation whose graph is
  connected and contains a loop will always be locally injective
  \cite{a:repCalg}. Clearly, if $\Lh$ is invertible, then any
  representation is locally injective.
\end{remark}

\noindent Furthermore, one can show that every $k$-string in
$\Rtwopluszero$ and every periodic orbit in $\Rtwoplus=\{(x,y)\in\reals^2:
x,y>0\}$ induce an irreducible representation of $\CLa$; distinct
orbits/$k$-strings induce inequivalent representations. In this way,
the representation theory of $\CLa$ is completely determined by the
dynamical properties of the affine map $\Lh$.

For instance, assume that $\Lh^n(\xv_1)=\xv_1$ and
$\xv_k=\Lh^{k-1}(\xv_1)=(d_k,\dt_k)\in\Rtwoplus$ for
$k=1,\ldots,n-1$. Then an $n$-dimensional $\ast$-representation of
$\CLa$ is constructed by setting
\begin{align}
  \phi(W) = 
  \begin{pmatrix}
    0               &  \sqrt{d_1}  & 0                & \cdots & 0 \\
    0               &  0                & \sqrt{d_2}  & \cdots & 0 \\
    \vdots          & \vdots            & \ddots           & \ddots & \vdots\\
    0               & 0                 & \cdots           & 0 &  \sqrt{d_{n-1}} \\
    e^{i\beta}\sqrt{d_n}               & 0                 & \cdots           & 0 & 0
  \end{pmatrix}
\end{align}
for any $\beta\in\reals$.

\subsection{Infinite dimensional representations}

There are two classes of infinite dimensional representations of
$\CLa$ that can be easily constructed. They come in the form of
infinite dimensional matrices with a finite number of non-zero
elements in each row and column. This assures that the usual matrix
multiplication is still well-defined. 

The first type is \emph{one-sided infinite dimensional
  representations}. In this case the basis of the vector space is
labeled by the natural numbers. The second type is \emph{two-sided}
infinite dimensional representations; and the basis vectors are
labeled by the integers. 

A one-sided representation of $\CLa$ can be constructed by choosing
$\xv_0=(d_0,0)$ (with $d_0>0$) such that
$\xv_k=(d_k,\dt_k)=\Lh^{k}(\xv_0)\in\Rtwoplus$ for $k=1,2,\ldots$. A
one-sided representation is then obtained by letting $\phi(W)$ be an
infinite dimensional matrix with non-zero elements
$W_{k,k+1}=\sqrt{d_k}$ for $k=0,1,\ldots$. 

If we assume $\Lh$ to be invertible, two-sided representations can be
constructed by choosing $\xv_0\in\Rtwoplus$ such that
$\xv_k=(d_k,\dt_k)=\Lh^k(\xv_0)\in\Rtwoplus$ for $k\in\integers$. We then set the
non-zero elements of $\phi(W)$ to be $W_{k,k+1}=\sqrt{d_k}$ for $k\in\integers$.

\subsection{Representations when $\det A=1$}\label{sec:CLaDetOne}

Let us now turn to the question concerning when different kinds of
representations can exist, if we fix a specific value of the central
element $\Ch$. Thus, we assume that $\det A=1$ and that the
irreducible representation is such that
\begin{align}
  -4a(D+\Dt)+(2-\tr A)(D+\Dt)^2+(2+\tr A)(D-\Dt)^2=4\ch_1\mid.
\end{align}
Since $D$ and $\Dt$ are diagonal, this constrains the vectors
$\xv_i=(d_i,\dt_i)$ to lie in the set defined by all $(r,s)\in\reals^2$ such that 
\begin{align}\label{CLaCurve}
  p(r,s)\equiv 
  -4a(r+s)+(2-\tr A)(r+s)^2+(2+\tr A)(r-s)^2-4\ch_1=0
\end{align}
We define $\Gamma=p^{-1}(0)$ and call this set the \emph{constraint
  curve} of an irreducible representation. When $\tr
A\neq 2$, we can write the constraint curve in the following convenient
form
\begin{align*}
  p(r,s)=\Delta\parac{(r+s-2\mu)^2+\frac{2+\tr A}{\Delta}(r-s)^2-4\ch}
\end{align*}
where $\Delta=2-\tr A$, $\mu=a/\Delta$ and $\ch=\mu^2+\ch_1/\Delta$.
By Proposition \ref{prop:CLaCommutation}, $\Gamma$ is invariant under
the action of the affine map $\Lh$. Moreover, if $\Gamma$ has several
disjoint components, $\Lh$ leaves each of them invariant. 

In the case when $\Gamma$ consists of one or two disjoint curves, one
can check that $\Lh$ will preserve the direction along the curves;
i.e. we can parametrize each curve by $\gamma$ and denote points on
the curve by $\xv(\gamma)$, such that if we define $\gamma_1'$ and
$\gamma_2'$ through $\xv(\gamma_1')=L(\xv(\gamma_1))$ and
$\xv(\gamma_2')=L(\xv(\gamma_2))$ then $\gamma_1'\geq\gamma_1$, and
$\gamma_1\geq\gamma_2$ implies that $\gamma_1'\geq\gamma_2'$.

Let us now prove a few results leading to Proposition
\ref{prop:CLaOneDimRep}, that tells us when there are no non-trivial
(i.e. of dimension greater than one) finite dimensional
representations.

\begin{proposition}\label{prop:Lhit}
  Let $\Lh$ denote the affine map defined by $\Lh(x,y)=(x\tr A-y\det
  A+a,x)$ and assume that $(\tr A)^2\neq 4\det A$ and $\tr A\neq 1+\det
  A$. Then it holds that
  \begin{align*}
    \Lh^n\twovec{x}{y}=
    \frac{1}{\lp-\lm}
    &\twomatrix{\lp^{n+1}-\lm^{n+1}}{-q\paraa{\lp^n-\lm^n}}{\lp^n-\lm^n}{-q\paraa{\lp^{n-1}-\lm^{n-1}}}
    \twovec{x}{y}\\
    +&\frac{a}{\lp-\lm}
    \twovec{\lp\frac{1-\lp^n}{1-\lp}-\lm\frac{1-\lm^n}{1-\lm}}{\frac{1-\lp^n}{1-\lp}-\frac{1-\lm^n}{1-\lm}}.
  \end{align*}
  where $\lambda_\pm=\parab{\tr A\pm\sqrt{(\tr A)^2-4\det A}}/2$.
\end{proposition}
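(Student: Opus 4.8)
The plan is to read $\Lh$ as an affine map whose linear part is the companion matrix $M=\twomatrix{\tr A}{-\det A}{1}{0}$ and whose translation part is the vector $\twovec{a}{0}$, so that $\Lh(\xv)=M\xv+\twovec{a}{0}$. Iterating this gives $\Lh^n(\xv)=M^n\xv+\parab{\sum_{k=0}^{n-1}M^k}\twovec{a}{0}$, which splits the proposition into two independent computations: a closed form for the matrix power $M^n$ (the homogeneous term acting on $\twovec{x}{y}$) and a closed form for the geometric-type sum $\sum_{k=0}^{n-1}M^k$ applied to $\twovec{a}{0}$ (the translation term). I note at the outset that the symbol $q$ appearing in the statement is $\det A$, which is natural since $\lp\lm=\det A$.

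For the homogeneous term I would use that the characteristic polynomial of $M$ is $\lambda^2-(\tr A)\lambda+\det A$, whose roots are exactly $\lpm$; the hypothesis $(\tr A)^2\neq 4\det A$ forces $\lp\neq\lm$, so $M$ is diagonalizable. Introducing the power sums $S_n=\paraa{\lp^n-\lm^n}/\paraa{\lp-\lm}$, the relation $\lambda^2=(\tr A)\lambda-\det A$ satisfied by $\lpm$ yields the recursion $S_{n+1}=(\tr A)S_n-(\det A)S_{n-1}$. A direct check, or a one-line induction from $M^{n+1}=M\cdot M^n$ together with this recursion, then shows that the entries of $M^n$ (top-left, top-right, bottom-left, bottom-right) are $S_{n+1},\ -(\det A)S_n,\ S_n,\ -(\det A)S_{n-1}$; substituting the definition of $S_n$ and factoring out $1/(\lp-\lm)$ reproduces the matrix displayed in the statement.

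For the translation term I would observe that $M^k\twovec{a}{0}=a\twovec{S_{k+1}}{S_k}$, so that $\sum_{k=0}^{n-1}M^k\twovec{a}{0}$ equals $a$ times the vector with top component $\sum_{j=1}^{n}S_j$ and bottom component $\sum_{k=0}^{n-1}S_k$. Each of these is a difference of two finite geometric series in $\lp$ and $\lm$, and the closed form $\sum_{k=0}^{n-1}\lambda^k=(1-\lambda^n)/(1-\lambda)$ is valid precisely when $\lambda\neq 1$. Since $1$ is an eigenvalue of $M$ exactly when $1-(\tr A)+\det A=0$, the second hypothesis $\tr A\neq 1+\det A$ is exactly what guarantees $\lp\neq 1$ and $\lm\neq 1$; evaluating the two sums then matches the two entries of the affine vector in the statement. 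Assembling the homogeneous and translation parts completes the proof.

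The computation is entirely elementary, so the only real work is bookkeeping: keeping the index shifts between $S_{n+1}$, $S_n$, and $S_{n-1}$ straight, and recognizing that the two stated hypotheses are not incidental but are precisely the non-degeneracy conditions ($\lp\neq\lm$ and $\lpm\neq 1$) needed to diagonalize $M$ and to sum the geometric series in closed form. I expect that this identification of the hypotheses with the degenerate cases, rather than any single calculation, is the point most worth stating carefully.
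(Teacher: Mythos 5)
Your proof is correct, and it follows the natural route: the paper actually states Proposition \ref{prop:Lhit} without giving any proof, and your argument (splitting $\Lh^n$ into $M^n$ plus the geometric sum $\sum_{k=0}^{n-1}M^k$ applied to $(a,0)^T$, computing $M^n$ via the power sums $S_n=(\lp^n-\lm^n)/(\lp-\lm)$ and the recursion $S_{n+1}=(\tr A)S_n-(\det A)S_{n-1}$, with $q=\det A$) is exactly the computation the authors implicitly rely on. Your identification of the two hypotheses as precisely the conditions $\lp\neq\lm$ and $\lpm\neq 1$ needed for the closed forms is also correct and worth stating, as you do.
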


\begin{lemma}\label{lemma:LhNoPeriods}
  Assume that $\det A=1$ and $\Delta\leq 0$. Then $\Lh$ has no periodic
  points other than fix-points.
\end{lemma}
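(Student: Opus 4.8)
The plan is to eliminate the two-dimensional structure in favour of a scalar linear recurrence and then show that a bounded (in particular, periodic) orbit is forced to be constant. Since $\det A=1$, the map reads $\Lh(x,y)=\paraa{(\tr A)x-y+a,\,x}$, so along any orbit $\xv_n=(x_n,y_n)=\Lh^n(\xv_0)$ one has $y_n=x_{n-1}$ while the first coordinate obeys
\begin{align*}
  x_{n+1}=(\tr A)\,x_n-x_{n-1}+a.
\end{align*}
Because two consecutive values determine the whole sequence, a point is periodic exactly when the associated scalar sequence $(x_n)$ is periodic, and it is a fixed point exactly when that sequence is constant. First I would record that $\Delta\le 0$ means $\tr A\ge 2$ and that the characteristic polynomial $\mu^2-(\tr A)\mu+1$ of the homogeneous recurrence has roots $\lambda_\pm$ with $\lambda_+\lambda_-=\det A=1$ and $\lambda_++\lambda_-=\tr A$. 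Note also that $\det A=1$ makes the linear part invertible, so $\Lh$ is invertible and the orbit is genuinely bi-infinite; this is what permits sending $n\to\pm\infty$ below (alternatively one runs the same argument on the forward orbit combined with the relation $x_{n+p}=x_n$).

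I would then split on whether $\tr A>2$ or $\tr A=2$. When $\tr A>2$ the roots are real with $0<\lambda_-<1<\lambda_+$; since $1$ is not a root, the inhomogeneous recurrence has the constant particular solution $x^\ast=a/\Delta$, so the general solution is $x_n=x^\ast+c_+\lambda_+^{\,n}+c_-\lambda_-^{\,n}$. A periodic orbit is a finite set, hence bounded, and letting $n\to+\infty$ forces $c_+=0$ while $n\to-\infty$ forces $c_-=0$. Thus $x_n\equiv x^\ast$, and the point is the (unique) fixed point of $\Lh$.

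The delicate case, and the one I expect to be the main obstacle, is the boundary $\tr A=2$ (i.e. $\Delta=0$), where $\lambda_+=\lambda_-=1$: the explicit iterate of Proposition \ref{prop:Lhit} is unavailable because its hypothesis $(\tr A)^2\neq 4\det A$ fails, and the eigenvalue argument degenerates since the linear part retains $1$ as an eigenvalue for every power. Here I would solve the recurrence directly: with a double root at $1$ the homogeneous solutions are $c_1+c_2 n$, and constant forcing $a$ (the second difference of $x_n$) produces the particular solution $\tfrac{a}{2}n^2$, so $x_n=\tfrac{a}{2}n^2+c_2 n+c_1$. Boundedness of a periodic orbit now forces $a=0$ and $c_2=0$, leaving $x_n\equiv c_1$, again a fixed point; incidentally this shows that for $a\neq 0$ there are no periodic points whatsoever. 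Combining the two cases gives the claim, the only real subtlety being the parabolic boundary $\tr A=2$, which the recurrence formulation handles uniformly through the single principle that a bounded orbit is constant.
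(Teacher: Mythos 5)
Your proof is correct, and it takes a genuinely different (though closely related) route from the paper's. The paper treats $\tr A>2$ by conjugating the affine map to its linear part $M=\left(\begin{smallmatrix}\tr A & -1\\ 1 & 0\end{smallmatrix}\right)$ around the fixed point (possible because $1$ is not an eigenvalue), diagonalizing $M$, and invoking the criterion that a diagonalizable linear map has periodic points of period greater than one iff some eigenvalue is a root of unity — which fails since $\lambda_\pm$ are real, distinct and $\neq\pm1$; the parabolic case $\tr A=2$ is dispatched with an unspecified ``direct computation.'' You instead collapse the dynamics to the scalar recurrence $x_{n+1}=(\tr A)x_n-x_{n-1}+a$, write down its general solution, and use the single principle that a periodic orbit is bounded: for $\tr A>2$ boundedness at $n\to+\infty$ and $n\to-\infty$ kills $c_+$ and $c_-$ respectively (here $0<\lm<1<\lp$ and the constant particular solution $a/\Delta$ exists because $1$ is not a characteristic root), and for $\tr A=2$ the solution $\frac{a}{2}n^2+c_2n+c_1$ is bounded only if $a=c_2=0$. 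What your approach buys is uniformity — the degenerate double-root case is handled by exactly the same mechanism rather than by a separate ad hoc computation — plus the explicit byproduct, stated only parenthetically in the paper, that for $\tr A=2$, $a\neq0$ there are no periodic points at all. What the paper's approach buys is that it isolates the conceptual reason (no eigenvalue on the unit circle) without needing to solve anything explicitly. All the individual steps you use (invertibility of $\Lh$, the root locations, the particular solutions) check out.
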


\begin{proof}
  When $\tr A=2$ and $\det A=1$, a direct computation shows that there
  are only periodic points when $a=0$, and these points are
  fix-points. 

  When $\tr A>2$ and $\det A=1$, the eigenvalues of the matrix
  \begin{align*}
    M=\twomatrix{\tr A}{-1}{1}{0}
  \end{align*}
  are real and distinct; furthermore, they are both different from
  $\pm 1$. Since no eigenvalue equals $1$, the affine map $\Lh$ is
  equivalent to the \emph{linear} map $M$ around some point. Thus,
  finding periodic points of $\Lh$ is equivalent to finding periodic
  points of $M$. Moreover, since the eigenvalues of $M$ are distinct,
  the matrix is diagonalizable. In total, this means that periodic
  points (of period greater than one) of $\Lh$ exist if and only if
  one of the eigenvalues of $M$ is an $n$'th root of unity. But this
  is impossible since both eigenvalues are real and different from
  $\pm 1$. Hence, $\Lh$ has no periodic points except for the
  possible fix-points.
\end{proof}

\begin{lemma}\label{lemma:nostring1}
  Assume that $\det A=1$, $\Delta<0$ and $a\geq 0$. For any integer $n\geq 1$,
  there are no $x,y>0$ such that $\Lh^n(x,0)=(0,y)$.
\end{lemma}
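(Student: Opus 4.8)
The plan is to reduce the statement to a sign computation on the first coordinate of $\Lh^n(x,0)$, using the closed form from Proposition \ref{prop:Lhit}.

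First I would record the basic facts about the eigenvalues. Since $\det A=1$ and $\Delta=2-\tr A<0$ we have $\tr A>2$, so the hypotheses $(\tr A)^2\neq 4\det A$ and $\tr A\neq 1+\det A$ of Proposition \ref{prop:Lhit} both hold, and the eigenvalues $\lpm=\parab{\tr A\pm\sqrt{(\tr A)^2-4}}/2$ are real and distinct. From $\lp\lm=\det A=1$ and $\lp+\lm=\tr A>2$ one gets $\lp>1>\lm>0$; in particular $\lp-\lm>0$ and neither eigenvalue equals $1$, so every denominator appearing in the formula of Proposition \ref{prop:Lhit} is nonzero.

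Next I would apply Proposition \ref{prop:Lhit} with $y=0$ (and $q=\det A=1$) and read off only the first coordinate of $\Lh^n(x,0)$, namely
\[
  [\Lh^n(x,0)]_1=\frac{1}{\lp-\lm}\left[(\lp^{n+1}-\lm^{n+1})x+a\left(\lp\frac{1-\lp^n}{1-\lp}-\lm\frac{1-\lm^n}{1-\lm}\right)\right].
\]
Since any point of the form $(0,y)$ has vanishing first coordinate, it suffices to show this expression is strictly positive for all $x>0$, $a\geq 0$ and $n\geq 1$; then $\Lh^n(x,0)$ can never lie on the $y$-axis, and in particular the target $(0,y)$ with $y>0$ is unreachable. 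The main step is therefore the sign analysis, which I would make transparent by rewriting the geometric ratios as $\frac{1-\lambda^n}{1-\lambda}=1+\lambda+\cdots+\lambda^{n-1}$, so that $\lambda\frac{1-\lambda^n}{1-\lambda}=\lambda+\lambda^2+\cdots+\lambda^n$. The bracket's $a$-coefficient then becomes $\sum_{j=1}^n(\lp^j-\lm^j)$, which is strictly positive because $\lp>\lm>0$ forces each summand to be positive. The coefficient of $x$ is $\lp^{n+1}-\lm^{n+1}>0$ for the same reason, and the overall prefactor $1/(\lp-\lm)$ is positive. Hence, for $x>0$ and $a\geq 0$, the first coordinate is a sum of a strictly positive term and a nonnegative term, so it is strictly positive, and thus nonzero.

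I do not expect a serious obstacle here: once Proposition \ref{prop:Lhit} is invoked the argument is purely a matter of tracking signs, and the only care needed is to rewrite the rational expressions as geometric sums so that the positivity of every contribution (using $\lp>1>\lm>0$ together with $a\geq 0$) is manifest. The hypothesis $a\geq 0$ enters precisely to guarantee that the affine term cannot cancel the linear-in-$x$ term; without it the two contributions could have opposite signs and the conclusion would fail.
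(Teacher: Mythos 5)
Your proposal is correct and follows essentially the same route as the paper: both invoke Proposition \ref{prop:Lhit}, isolate the first coordinate of $\Lh^n(x,0)$, and use $\lp>1>\lm>0$ together with the positivity of the geometric sums $\lp\Fpn-\lm\Fmn=\sum_{j=1}^n(\lp^j-\lm^j)>0$ and $\lp^{n+1}-\lm^{n+1}>0$ to conclude that $a\geq 0$ forces $x\leq 0$. The only cosmetic difference is that the paper sets the first coordinate equal to zero and deduces $x\leq 0$, whereas you show it is strictly positive for $x>0$; these are the same sign argument.
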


\begin{proof}
  When $\det A=1$ and $\tr A>2$, the relations $(\tr A)^2\neq 4\det A$
  and $\tr A\neq 1+\det A$ are fulfilled. Therefore, by Proposition
  \ref{prop:Lhit}, $\Lh^n(x,0)=(0,y)$ is equivalent to
  \begin{align}
    &\paraa{\lp^{n+1}-\lm^{n+1}}x+a\paraa{\lp\Fpn-\lm\Fmn} = 0\label{eq:Lhit1}\\
    &\paraa{\lp^n-\lm^n}x+a\paraa{\Fpn-\Fmn}=(\lp-\lm)y,\label{eq:Lhit2}
  \end{align}
  where $\Fpn = \paraa{1-\lp^n}/(1-\lp)$ and
  $\Fmn=\paraa{1-\lm^n}/(1-\lm)$. In the current case, $0<\lm<1$ and
  $\lp>1$, which implies that $\Fpn>\Fmn>0$. Thus, when $a\geq 0$ we
  must have $x\leq 0$ by equation \eqref{eq:Lhit1}. 
\end{proof}

\begin{lemma}\label{lemma:nostring2}
  Assume that $\det A=1$, $\Delta=0$ and $a\geq 0$. For every integer
  $n\geq 1$ there exist no $x,y>0$ such that $\Lh^n(x,0)=(0,y)$.
\end{lemma}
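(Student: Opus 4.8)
The plan is to handle this degenerate case directly, because Proposition \ref{prop:Lhit} is unavailable here: when $\det A=1$ and $\Delta=2-\tr A=0$ we have $\tr A=2$, so $(\tr A)^2=4=4\det A$ and the two eigenvalues of the linear part coincide at $\lambda=1$. The first step is therefore to obtain a closed form for the iterates $\Lh^n$ by exploiting a nilpotent splitting of the linear part rather than a diagonalization. Writing the linear part as
\begin{align*}
  M=\twomatrix{2}{-1}{1}{0}=\mid_2+N,\qquad N=\twomatrix{1}{-1}{1}{-1},
\end{align*}
one checks immediately that $N^2=0$. This gives the clean formulas $M^n=\mid_2+nN$ and $\sum_{k=0}^{n-1}M^k=n\mid_2+\tfrac{n(n-1)}{2}N$ for every $n\geq 1$.

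Next I would assemble the affine iterate from $\Lh(\xv)=M\xv+(a,0)^{\mathsf T}$ via $\Lh^n(\xv)=M^n\xv+\paraa{\sum_{k=0}^{n-1}M^k}(a,0)^{\mathsf T}$. Using $M^n\twovec{x}{0}=\twovec{(1+n)x}{nx}$ together with $N\twovec{a}{0}=\twovec{a}{a}$, the first coordinate of $\Lh^n(x,0)$ works out to
\begin{align*}
  (1+n)x+\parab{n+\tfrac{n(n-1)}{2}}a=(n+1)\parab{x+\tfrac{n}{2}a}.
\end{align*}

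The key observation is then purely a matter of signs: for $n\geq 1$, $x>0$ and $a\geq 0$ the factor $n+1$ is strictly positive and $x+\tfrac{n}{2}a\geq x>0$, so the first coordinate of $\Lh^n(x,0)$ is strictly positive and can never equal $0$. Hence there are no $x,y>0$ with $\Lh^n(x,0)=(0,y)$, which is the claim.

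I do not expect a genuine obstacle here beyond correctly diagnosing the degeneracy that excludes Proposition \ref{prop:Lhit} and recording the nilpotent splitting $M=\mid_2+N$; once that is in place the positivity argument is immediate and parallels the concluding positivity step of Lemma \ref{lemma:nostring1}. The only point to be careful about is that the case $\Delta=0$ forces $\tr A=2$ exactly, so that $\lambda_+=\lambda_-=1$ and the geometric-series manipulation used in the $\tr A>2$ case degenerates into the arithmetic factor $\tfrac{n(n-1)}{2}$ appearing above.
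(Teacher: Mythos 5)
Your proof is correct and follows essentially the same route as the paper: both compute the explicit closed form of $\Lh^n$ in the degenerate case $\tr A=2$, $\det A=1$ (your $M^n=\mid_2+nN$ and $\sum_{k=0}^{n-1}M^k=n\mid_2+\tfrac{n(n-1)}{2}N$ reproduce exactly the paper's stated iterate with translation part $\tfrac{an}{2}(n+1,n-1)^{\mathsf T}$), and then conclude from the sign of the first coordinate $(n+1)\paraa{x+\tfrac{n}{2}a}$ that $x\leq 0$. The only difference is that you supply the nilpotent-splitting derivation which the paper leaves as an easy computation.
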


\begin{proof}
  When $\det A=1$ and $\tr A=2$, the $n$'th iterate of the affine map
  can easily be calculated as
  \begin{align*}
    \Lh^n\twovec{x}{y}=
    \twomatrix{1+n}{-n}{n}{1-n}\twovec{x}{y}+\frac{an}{2}\twovec{n+1}{n-1},
  \end{align*}
  and one sees directly that $\Lh^n(x,0)=(0,y)$ implies that $x\leq 0$ since
  $a\geq 0$.
\end{proof}

\begin{proposition}\label{prop:CLaOneDimRep}
  Assume that $\det A=1$ and $\Delta\leq 0$. If $\phi$ is an
  irreducible finite dimensional $\ast$-representation of $\CLa$ in
  one of the following situations
  \begin{enumerate}
  \item $a\geq 0$,
  \item $\Delta<0$, $a<0$ and $\ch\leq 0$,
  \item $\Delta<0$, $a<0$, $\ch>0$ and $\mu/\sqrt{\ch}\leq 1$,
  \end{enumerate}
  then $\phi$ is one-dimensional.
\end{proposition}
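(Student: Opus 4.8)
The plan is to reduce the statement to a purely dynamical question about the affine map $\Lh$ and then dispose of it case by case. Since $\det A=1$, the linear part $M=\twomatrix{\tr A}{-\det A}{1}{0}$ has determinant $\det A=1\neq 0$, so $\Lh$ is invertible; by Remark \ref{rem:locallyInjective} every finite dimensional $\ast$-representation is then locally injective, and Theorem \ref{thm:CLaRep} applies. Hence an irreducible $\phi$ corresponds to a single string or a single loop. A loop on $k$ vertices forces a periodic orbit of $\Lh$ of order $k$; since $\det A=1$ and $\Delta\leq 0$, Lemma \ref{lemma:LhNoPeriods} rules out all periodic points except fix-points, so the only admissible loop is a single fixed point, giving a one-dimensional representation. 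It therefore remains to exclude $k$-strings with $k\geq 2$, i.e.\ to show that there are no $x,y>0$ and no integer $n\geq 1$ with $\Lh^n(x,0)=(0,y)$. For situation (1) this is already available: when $\Delta<0$ it is Lemma \ref{lemma:nostring1} and when $\Delta=0$ it is Lemma \ref{lemma:nostring2}, and $\Delta\leq 0$ together with $a\geq 0$ is exactly the union of these hypotheses, so situation (1) follows at once.

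For situations (2) and (3) we are in the regime $\Delta<0$, hence $\tr A>2$ and $\lp>1>\lm>0$ with $\lp\lm=1$; the sign argument of those lemmas breaks down for $a<0$, so I would instead bring in the constraint curve $\Gamma=p^{-1}(0)$, which is $\Lh$-invariant. A transmitter $(x,0)$ and a receiver $(0,y)$ of a string both lie on $\Gamma$, and restricting $p(r,s)=0$ to the axes gives the same equation $r^2-ar-\ch_1=0$ in either case; by the $r\leftrightarrow s$ symmetry of $p$ the only candidate transmitter and receiver are $(r_+,0)$ and $(0,r_+)$ with $r_+=\frac{1}{2}(a+\sqrt{a^2+4\ch_1})$, and such a positive root exists precisely when $\ch_1>0$. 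A short computation from $\ch_1=\Delta\ch-a^2/\Delta$ and $\mu=a/\Delta$ gives the equivalence $\ch_1>0\iff\Delta^2\ch<a^2\iff\ch<\mu^2\iff\mu/\sqrt{\ch}>1$. Thus in situation (3), where $\mu/\sqrt{\ch}\leq 1$, we have $\ch_1\leq 0$: the curve $\Gamma$ meets no positive point of either axis, there is no positive transmitter at all, and no string of length $\geq 2$ can exist.

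Situation (2) is the genuinely geometric case, since $\ch\leq 0$ forces $\ch_1>0$ and a positive transmitter does exist. Here I would use the convenient form of $\Gamma$ in coordinates $\sigma=r+s$, $\tau=r-s$, namely $(\sigma-2\mu)^2-\kappa\tau^2=4\ch$ with $\kappa=-(2+\tr A)/\Delta=(\tr A+2)/(\tr A-2)>1$. For $\ch<0$ this is a hyperbola with two disjoint branches $\tau>0$ and $\tau<0$, each of which $\Lh$ leaves invariant. The candidate transmitter $(r_+,0)$ has $\tau=r_+>0$ and the receiver $(0,r_+)$ has $\tau=-r_+<0$, so they sit on different $\Lh$-invariant components and no forward orbit can join them; the degenerate case $\ch=0$ is handled the same way, $\Gamma$ then being the two $\Lh$-invariant eigenlines of the saddle through the fixed point $(\mu,\mu)$, with transmitter and receiver again on different lines. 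In every situation the single string or loop is therefore trivial or a fixed point, so $\phi$ is one-dimensional.

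The routine part is situation (1) and the loop/reduction step. I expect the main obstacle to be situations (2) and (3): verifying that $(r_+,0)$ and $(0,r_+)$ are the only admissible transmitter/receiver, proving the equivalence $\mu/\sqrt{\ch}\leq 1\Leftrightarrow\ch_1\leq 0$, and — most delicate — justifying rigorously that the two branches (or eigenlines) of $\Gamma$ are genuinely $\Lh$-invariant and lie in distinct components of $\Gamma\cap\Rtwopluszero$, so that the forward orbit of the transmitter can never cross over to the receiver.
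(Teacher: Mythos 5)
Your overall architecture — invertibility of $\Lh$ from $\det A=1$, reduction to a single string or loop via Theorem \ref{thm:CLaRep}, elimination of non-trivial loops by Lemma \ref{lemma:LhNoPeriods}, and situation (1) by Lemmas \ref{lemma:nostring1} and \ref{lemma:nostring2} — is exactly the paper's proof. Where you diverge is in situations (2) and (3), which the paper dispatches with two sentences of the form ``one can explicitly check,'' and here your version is not only more detailed but assigns the geometry to the cases differently from the paper's text — and your assignment is the correct one. From $\ch_1=\Delta(\ch-\mu^2)$ with $\Delta<0$, $a<0$, $\mu>0$: in case (2) ($\ch\leq 0$) one gets $\ch_1>0$, so $\Gamma$ \emph{does} meet both positive axes, at $(r_+,0)$ and $(0,r_+)$, but these lie on the two branches of the hyperbola separated by the sign of $r-s$, each $\Lh$-invariant — this is the substance behind the paper's claim that ``no component of $\Gamma$ intersects both positive axes.'' In case (3) ($\ch>0$, $\mu/\sqrt{\ch}\leq 1$) one gets $\ch_1\leq 0$, hence $r_+\leq 0$ and no positive transmitter exists at all; the paper's stated picture for this case (a component meeting both positive axes, with the $r$-coordinate increasing under iteration) does not actually occur under these hypotheses, so your argument is both simpler and the one that actually closes the case. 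The price of your route is the branch-invariance claim in case (2), which you rightly flag as the delicate step. It does hold and needs only a sentence: $\Lh$ maps $\Gamma$ homeomorphically onto itself and therefore permutes its two components, and the vertex $(\mu+\tau_0/2,\mu-\tau_0/2)$ of the $r>s$ branch is sent to a point with $r'-s'=\tau_0(\tr A)/2>0$, so the branches are not swapped; in the degenerate case $\ch=0$ the two lines are the eigenlines of $M$ through the fixed point $(\mu,\mu)$ (one checks $\kappa=(\lambda+1)^2/(\lambda-1)^2$ for each eigenvalue $\lambda$), hence individually invariant, and since $(\mu,\mu)$ is fixed and distinct from $(r_+,0)$ the orbit of the transmitter never leaves its line. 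With that inserted, your proof is complete and, for case (3), preferable to the published sketch.
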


\begin{proof}
  In all three cases, Lemma \ref{lemma:LhNoPeriods} implies that there
  can be no non-trivial (i.e. of dimension greater than one) loop
  representations.

  In Case 1, Lemma \ref{lemma:nostring1} and Lemma
  \ref{lemma:nostring2} imply that there are no non-trivial string
  representations.  In Case 2 one can explicitly check that there
  is no component of $\Gamma$ intersecting both positive axes. Thus,
  no non-trivial string representations can exist.  In Case 3, there
  are constraint curves with a component that do intersect both
  positive axes. However, one can explicitly check that iterations of
  the point of intersection with the positive $r$-axis (where a string
  must start) increases the $r$-coordinate. Thus, one can never hit
  the positive $s$-axis (where a string must end) which implies that
  no non-trivial string representations can exist.
\end{proof}

\noindent In \cite{abhhs:noncommutative}, a special case of $\CLa$ was
considered where it holds that $-2<\tr A<2$. Then $\tr A$ can be
parametrized by setting $\tr A = 2\cos2\theta$. This makes it obvious
that the affine map $\Lh$ corresponds to a ``rotation'' by $2\theta$ on
the constraint curve, which will be an ellipse. The same kind of
parametrization can be done when $\tr A > 2$, in which case it is
convenient to set $\tr A = 2\cosh2\theta$ with $\theta>0$. Let us
gather the formulas one obtains in the following proposition.

\begin{proposition}\label{prop:CLaThetaParam}
  Assume that $\det A=1$, $\ch>0$ and $\tr A=2\cosh2\theta$ for
  some $\theta>0$. If we set
  \begin{align*}
    \xv_1(\beta) &= \sqrt{\ch}\parac{\frac{\mu}{\sqrt{\ch}}+\frac{\cosh\beta}{\cosh\theta},
    \frac{\mu}{\sqrt{\ch}}+\frac{\cosh(\beta-2\theta)}{\cosh\theta}},\\
    \xv_2(\beta) &= \sqrt{\ch}\parac{\frac{\mu}{\sqrt{\ch}}-\frac{\cosh\beta}{\cosh\theta},
    \frac{\mu}{\sqrt{\ch}}-\frac{\cosh(\beta-2\theta)}{\cosh\theta}},
  \end{align*}
  then the following holds
  \begin{enumerate}
  \item $\Gamma=\{\xv_1(\beta): \beta\in\reals\}\cup\{\xv_2(\beta):\beta\in\reals\}$,
  \item $\Lh\paraa{\xv_i(\beta)}=\xv_i(\beta+2\theta)$ for $i=1,2$,
  \item $\Lh^{n-1}(x,0)=(0,x)$ if and only if
    \begin{equation*}
      x=\frac{2\mu\sinh\theta\sinh(n-1)\theta}{\cosh n\theta}.
    \end{equation*}
  \end{enumerate}
\end{proposition}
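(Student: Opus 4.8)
The plan is to establish the three claims in turn after translating the hypotheses into hyperbolic form. From $\tr A = 2\cosh 2\theta$ and the double-angle identities one gets $\Delta = 2-\tr A = -4\sinh^2\theta$ and $2+\tr A = 4\cosh^2\theta$, and since $\det A = 1$ one has $(\tr A)^2 - 4 = 4\sinh^2 2\theta$, so the eigenvalues appearing in Proposition~\ref{prop:Lhit} are $\lpm = \cosh 2\theta \pm \sinh 2\theta = e^{\pm 2\theta}$ with $\lp - \lm = 2\sinh 2\theta$ and $q = \det A = 1$. I will also use repeatedly that $a = \Delta\mu = -4\sinh^2\theta\,\mu$.

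For claim (1), I introduce $u = r+s-2\mu$ and $w = r-s$. In the convenient form of $p$ recorded before the proposition, the relation $\tfrac{2+\tr A}{\Delta} = -\coth^2\theta$ turns $\Gamma$ into the hyperbola $u^2 - \coth^2\theta\,w^2 = 4\ch$, which has two branches because $\ch>0$. Substituting $\xv_1(\beta)$ and applying the sum/difference-to-product identities to $\cosh\beta \pm \cosh(\beta-2\theta)$ gives $u = 2\sqrt{\ch}\cosh(\beta-\theta)$ and $w = 2\sqrt{\ch}\tanh\theta\sinh(\beta-\theta)$, so $u^2 - \coth^2\theta\,w^2 = 4\ch\big(\cosh^2(\beta-\theta) - \sinh^2(\beta-\theta)\big) = 4\ch$; hence $\xv_1(\beta)\in\Gamma$, and $\xv_2(\beta)$ is identical up to the sign of $u$. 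Since $u>0$ on the $\xv_1$ family and $u<0$ on the $\xv_2$ family, and $\beta\mapsto(\cosh(\beta-\theta),\sinh(\beta-\theta))$ traces a full branch of the standard hyperbola as $\beta$ runs over $\reals$, the two families cover the two branches of $\Gamma$ exactly, which is (1).

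For claim (2), I compute $\Lh(\xv_i(\beta))$ from $\Lh(x,y) = (\tr A\,x - y + a,\,x)$. The second coordinate of $\Lh(\xv_1(\beta))$ is the first coordinate of $\xv_1(\beta)$, which is exactly the second coordinate of $\xv_1(\beta+2\theta)$. For the first coordinate, the constant part $(2\cosh 2\theta - 1)\mu + a$ collapses to $\mu$ using $a = (2-\tr A)\mu$, while the product-to-sum identity $2\cosh 2\theta\cosh\beta = \cosh(\beta+2\theta) + \cosh(\beta-2\theta)$ converts $2\cosh 2\theta\cosh\beta - \cosh(\beta-2\theta)$ into $\cosh(\beta+2\theta)$; together these give the first coordinate of $\xv_1(\beta+2\theta)$. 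The $\xv_2$ case differs only by an overall sign, proving (2).

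For claim (3), I apply Proposition~\ref{prop:Lhit} with $n$ replaced by $n-1$ and insert $\lpm = e^{\pm 2\theta}$, $q=1$. Requiring the first coordinate of $\Lh^{n-1}(x,0)$ to vanish is a single linear equation in $x$; evaluating the geometric sums via $\lp\tfrac{1-\lp^{n-1}}{1-\lp} - \lm\tfrac{1-\lm^{n-1}}{1-\lm} = \tfrac{2\sinh(n-1)\theta\,\sinh n\theta}{\sinh\theta}$ and using $\sinh 2n\theta = 2\sinh n\theta\cosh n\theta$ together with $a=-4\sinh^2\theta\,\mu$ yields precisely $x = \tfrac{2\mu\sinh\theta\,\sinh(n-1)\theta}{\cosh n\theta}$. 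It then remains to check that with this $x$ the second coordinate equals $x$; rather than repeat the geometric-sum bookkeeping, I would note the symmetry $\sigma\Lh\sigma = \Lh^{-1}$ for the swap $\sigma(x,y)=(y,x)$, which is a direct check. This symmetry forces that if $\Lh^{n-1}(x,0) = (0,y)$ then $\Lh^{n-1}(y,0) = (0,x)$; since the vanishing-first-coordinate condition has the unique solution found above, $y$ must equal $x$. The main obstacle is the bookkeeping in this last step: reducing the geometric sums to closed hyperbolic form and tracking the sign of $a$. The symmetry observation is the one non-mechanical ingredient, and it is what makes the equality of the two endpoint values structural rather than a numerical accident.
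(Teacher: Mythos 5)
The paper states Proposition \ref{prop:CLaThetaParam} without giving any proof, so there is nothing to compare your argument against; judged on its own, your proof is correct and complete. The hyperbolic translations ($\Delta=-4\sinh^2\theta$, $2+\tr A=4\cosh^2\theta$, $\lpm=e^{\pm2\theta}$, $a=-4\mu\sinh^2\theta$) are right, the reduction of $\Gamma$ to $u^2-\coth^2\theta\,w^2=4\ch$ with $u=2\sqrt{\ch}\cosh(\beta-\theta)$, $w=2\sqrt{\ch}\tanh\theta\sinh(\beta-\theta)$ establishes (1), and the product-to-sum identity plus $(\tr A-1)\mu+a=\mu$ gives (2). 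For (3), your evaluation of the geometric sum as $2\sinh(n-1)\theta\sinh n\theta/\sinh\theta$ and the resulting $x=2\mu\sinh\theta\sinh(n-1)\theta/\cosh n\theta$ both check out, and the conjugation identity $\sigma\Lh\sigma=\Lh^{-1}$ (which I verified: both sides send $(x,y)$ to $(y,(\tr A)y-x+a)$ when $\det A=1$) is a genuinely nice device: it converts the second endpoint condition into the same linear equation whose solution is unique because $\sinh 2n\theta\neq0$ for $\theta>0$, so you legitimately avoid repeating the sum manipulation. The only things worth adding for completeness are a one-line check that the hypotheses of Proposition \ref{prop:Lhit} hold here ($\tr A>2$ gives $(\tr A)^2>4\det A$ and $\tr A>1+\det A$) and the degenerate case $n=1$, where both sides of (3) force $x=0$.
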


\section{$\ast$-representations of $\A_L$}\label{sec:repsAL}

\noindent To study the representation theory of $\A_L$, we will use
the same techniques as for the representation theory of $\CLa$; we will
again see that the dynamical properties of an affine map is of crucial
importance. Since there exists a homeomorphism $\psi:\A_L\to\CLa$,
every representation of $\CLa$ induces a representation of
$\A_L$. However, in general there are representations that can not be
induced from $\CLa$.

In any (finite dimensional) $\ast$-representation $\phi$, $\phi(E)$
and $\phi(\Et)$ will be two commuting hermitian matrices. Therefore,
any such representation is unitarily equivalent to one where both
$\phi(E)$ and $\phi(\Et)$ are diagonal. Let us assume such a basis to
be chosen and write
\begin{align*}
  &E = \diag(e_1,e_2,\ldots,e_n)\\
  &\Et = \diag(\et_1,\et_2,\ldots,\et_n).
\end{align*}
For matrices in this basis, the defining relations of $\A_L$ reduce to
\begin{align*}
  &S_{ij}\paraa{\alpha e_i+\beta\et_i+u-e_j}=0\\
  &S_{ij}\paraa{\gamma e_i+\delta\et_i+v-\et_j}=0,
\end{align*}
since $S^\dagger = T$ and $E$, $\Et$ are diagonal. There are two ways
of fulfilling these equations: Either $S_{ij}=0$ or
\begin{align*}
  \twovec{e_j}{\et_j} = \Amatrix\twovec{e_i}{\et_i}+\twovec{u}{v} = L\twovec{e_i}{\et_i},
\end{align*}
and by defining $\vv_i=(e_i,\et_i)$ we write this as $\vv_j=L(\vv_i)$.

Let $G_S=(V,E)$ be the directed graph of $S$.  If $(i,j)\in E$
(i.e. $S_{ij}\neq 0$) then a necessary condition for a representation
to exist is that $\vv_j=L(\vv_i)$. On the other hand, given a graph
$G$ and vectors $\{\vv_k\}$ such that $\vv_j=L(\vv_i)$ if $(i,j)\in
E$, then any matrix whose digraph equals $G$ defines a representation
of $\A_L$. Hence, the set of representations can be parameterized by
graphs allowing such a construction.

\begin{definition}
  A graph $G=(\{1,2,\ldots,n\},E)$ is called \emph{$L$-admissible} if
  there exists $\vv_k\in\reals^2$ for $k=1,2,\ldots,n$, such that
  $\vv_j=L(\vv_i)$ if $(i,j)\in E$. An $L$-admissible graph is called
  \emph{nondegenerate} if there exists such a set
  $\{\vv_1,\ldots,\vv_n\}$ with at least two distinct vectors;
  otherwise the graph is called \emph{degenerate}.
\end{definition}

\noindent By this definition, the digraph of $S$ in any representation
is $L$-admissible, and every $L$-admissible graph generates at least
one representation. Clearly, given an $L$-admissible graph, there can
exist a multitude of inequivalent representations associated to it.
If $L$ has a fix-point $(e_f,\et_f)$, then any graph is $L$-admissible
and this representation corresponds to $E=e_f\mid$ and $\Et=\et_f\mid$
and $S$ an arbitrary matrix. However, not all graphs will be
nondegenerate $L$-admissible graphs.

Let us now show that in the case when the representation is locally
injective (cp. Remark \ref{rem:locallyInjective}), we can bring it to
a convenient form. Let $G=(V,E)$ be an $L$-admissible connected graph
(if it is not connected, the representation will trivially be
reducible, and we can separately consider each component) and let $S$
be a matrix with digraph equal to $G$, such that the representation is
locally injective. Furthermore, let $\vt_1,\ldots,\vt_k$ be an
enumeration of the pairwise \emph{distinct} vectors in the set
$\{\vv_1,\ldots,\vv_n\}$ such that $\vt_{i+1}=L(\vt_i)$, and define
$V_i\subseteq V$ as follows
\begin{align*}
  V_i = \{l\in V: \vv_l=\vt_i\}\qquad\text{for }i=1,\ldots,k.
\end{align*}
Since the representation is assumed to be locally injective, we can
only have edges from vertices in the set $V_i$ to vertices in the set
$V_{i+1}$ (identifying $k+1\equiv 1$). Hence, the vertices of the
graph can be permuted such that the matrix $S$ takes the following
block form:
\begin{align}
  S = 
  \begin{pmatrix}
    0               &  S_1  & 0                & \cdots & 0 \\
    0               &  0                & S_2  & \cdots & 0 \\
    \vdots          & \vdots            & \ddots           & \ddots & \vdots\\
    0               & 0                 & \cdots           & 0 &  S_{k-1} \\
    S_k & 0                 & \cdots           & 0 & 0
  \end{pmatrix}\label{eq:SRep}    
\end{align}
where each matrix $S_i$ is a $|V_i|\times |V_{i+1}|$ matrix.  Thus,
the representations of $\A_L$ are generated by the affine map $L$ in
the following way: Any point $\vt_1\in\reals^2$ gives rise to the
points $\vt_i=(e_i,\et_i)=L^{i-1}(\vt_1)$; by setting
\begin{align*}
  E = 
  \begin{pmatrix}
    e_1\mid_{n_1} & & \\
    & \ddots & \\
    & & e_k\mid_{n_k} 
  \end{pmatrix}
  \qquad
  \Et = 
  \begin{pmatrix}
    \et_1\mid_{n_1} & & \\
    & \ddots & \\
    & & \et_k\mid_{n_k} 
  \end{pmatrix}
\end{align*}
together with any matrix of the form \eqref{eq:SRep}, with $S_i$ a
$n_i\times n_{i+1}$ matrix, one obtains a representation of $\A_L$ of
dimension $n_1+\cdots+n_k$. Unless $\xv_1$ is a periodic point of
order $k$ we must set $S_k=0$. Moreover, distinct iterations of $L$
(i.e, at least one of the points differ) can not give rise to
equivalent representations since the eigenvalues of $E$ and $\Et$ will
be different.

\section{Representations and surface geometry}\label{sec:repsurf}

\noindent We will now study the relation between the geometry of the
inverse image $\Sigma=C^{-1}(0)$ and representations of the derived
algebra $\CLa$. More precisely, the geometry of $C^{-1}(0)$, for
different values of $\alpha_0,\alpha_1,c_0$ will be compared with the
representations of $\CLa$ with $\ch_0$ (the value of the central
element) being equal to $c_0$, and $a,\tr A, \det A$ related to
$\alpha_0,\alpha_1,\betat_1,\gammat_1,\deltat_1,\hbar$ as in Section
\ref{sec:SurfaceRelation}. Furthermore, the comparison will be made
for small positive values of $\hbar$. When $\det A\neq 0$, the affine
map $\Lh$ will be invertible, and therefore Theorem \ref{thm:CLaRep}
applies, i.e. all finite dimensional $\ast$-representations can be
classified in terms of loops and strings.

Let us rewrite the polynomial $C(x,y,z)$, as defined in
\eqref{eq:Cstd}, to a form which makes it easier to identify the
topology of the surface in the case when $\alpha_1\neq 0$
\begin{equation*}
  C(x,y,z) = \frac{\alpha_1}{4}\bracketd{
    \parac{x^2+y^2+\frac{\alpha_0}{\alpha_1}}^2+\frac{2}{\alpha_1}z^2
    -\parac{\frac{\alpha_0^2}{\alpha_1^2}+\frac{2c_0}{\alpha_1}}
  }.
\end{equation*}
If $\alpha_1<0$ the inverse image will be non-compact, but if
$\alpha_1>0$ the genus of the surface will be determined by the
quotient $\mu/\sqrt{c}$ where
\begin{equation*}
  \mu = -\frac{\alpha_0}{\alpha_1}\qquad\text{and}\qquad
  c = \frac{\alpha_0^2}{\alpha_1^2}+\frac{2c_0}{\alpha_1}.
\end{equation*}
If $-1<\mu/\sqrt{c}<1$ the inverse image is a compact surface of genus
0, and if $\mu/\sqrt{c}>1$ the surface has genus 1 (see
\cite{abhhs:noncommutative} for details and proofs).  When $\alpha_1=0$, the
polynomial becomes
\begin{align*}
  C(x,y,z) = \frac{\alpha_0}{2}\paraa{x^2+y^2}+\half z^2 - \half c_0,
\end{align*}
and the smooth inverse images consist of ellipsoids and (one or two
sheeted) hyperboloids. A complete table of the different geometries
can be found in Appendix B. We note that when the algebra $\CLa$
arises from a surfaces, then $\mu^2+2\ch_0/\alpha_1=\ch$.

By introducing
\begin{align*}
  t^2 = \frac{1+2\hbar^2\deltat_1-\frac{1}{2}\hbar^2\alpha_1}{4\hbar^2}
  \quad\text{ and }\quad
  \ch = \frac{\alpha_0^2}{\alpha_1^2}+\frac{2\ch_0}{\alpha_1}
\end{align*}
one can rewrite the defining equation of the constraint curve $\Gamma$ as
\begin{align}
  &p(r,s)=\parab{r+s+\frac{2\alpha_0}{\alpha_1}}^2+\frac{8t^2}{\alpha_1}(r-s)^2-4\ch\label{eq:pNonzero}\\
  &p(r,s)=\frac{\alpha_0}{2}(r+s)+t^2(r-s)^2-\ch_0\label{eq:pZero}
\end{align}
when $\alpha_1\neq 0$ and $\alpha_1=0$ respectively. Since we only
consider small values of $\hbar$, we can assume that $t^2>0$.

Note that we will use the parameters of the algebra and the parameters
of the surface interchangeably, and they are assumed to be related as
in Section \ref{sec:SurfaceRelation}.

\subsection{The degenerate cases}

\noindent Let us take a look at the cases when the inverse image is
not a surface (P.1 --P6, Z.1 -- Z.4), by studying some examples. For
instance, in case P.4, $\Sigma$ will be the empty set, and we easily
see that there are no non-negative $(r,s)$ on the constraint curve
$\Gamma$. Therefore, since the eigenvalues of $D$ and $\Dt$ are
non-negative, no representations can exist.  In case P.2 one gets
$\Sigma=\{(0,0,0)\}$, and the only non-negative point on $\Gamma$ is
$(0,0)$. Therefore, all representations must satisfy $D=\Dt=0$, which
implies that $W=0$.

By considering all degenerate cases, one can compile the following
table:

\renewcommand{\arraystretch}{1.3}
\begin{center}
  \begin{tabular}{|l|l|}
    \hline
    $\Sigma=C^{-1}(0)$                           & Irreducible $\ast$-representations $\phi$ \\ \hline\hline
    $\emptyset$                                  & None \\ \hline
    $\{(0,0,0)\}$                                & $\phi(W)=0$ \\ \hline
    $\{(x,y,0): x^2+y^2=|\alpha_0|/|\alpha_1|\}$ & $\phi(W)=\sqrt{|\alpha_0|/|\alpha_1|}$ \\ \hline
  \end{tabular}
\end{center}
\renewcommand{\arraystretch}{1}

\noindent In particular, we note that all irreducible representations are one-dimensional.

\subsection{Compact surfaces}

\noindent We will focus on the compact surfaces for which
$\alpha_1>0$ (P.7 -- P.10), as the only other compact surface (Z.5) can be treated
analogously.  When $\alpha_1>0$ and $\ch>0$, the constraint curve will
be an ellipse symmetric around the line $\pi/4$ and centered at
$(\mu,\mu)$.  The analysis of the corresponding finite dimensional
representations was done in \cite{abhhs:noncommutative} but we will
recall some basic facts.

Let us introduce $\theta\in(0,\pi/2)$ such that
$2\cos2\theta=\tr A$. The action of $\Lh$ can then be written as
\begin{align*}
  \Lh\twovec{x}{y}=\twomatrix{2\cos 2\theta}{-1}{1}{0}\twovec{x}{y}+\twovec{4\mu\sin^2\theta}{0}
\end{align*}
and one can understand it as a ``rotation'' by an angle $2\theta$ on
the ellipse.  One can easily show that $\Gamma\subset\Rtwoplus$ when
$\mu/\sqrt{\ch}>1/\cos\theta$; thus, by Lemma \ref{lemma:transrec}, no
representation in this region can contain a string, and therefore all
irreducible representations must consist of a single loop. When
$\mu/\sqrt{\ch}\leq 1$ no loop representations can exist, since a too
large part of the ellipse is contained in
$\reals^2\backslash\Rtwoplus$. In the small region
$1<\mu/\sqrt{\ch}\leq 1/\cos\theta$ ($\cos\theta\to 1$ as $\hbar\to
0$) both strings and loops can exist. We call surfaces in this region
\emph{critical tori}; these surfaces have a very narrow hole through
them.

However, representations do no exist for all values of $\theta$ and
the following conditions must be fulfilled for a $n$-dimensional
representation to exist:
\begin{align*}
  &\text{Loop:}\qquad e^{2in\theta}=1 \\
  &\text{String:}\qquad \sqrt{\ch}\cos n\theta+\mu\cos\theta=0 \\
  &\text{String (Z.5):}\qquad \ch_0 = \frac{\alpha_0^2\hbar^2(n^2-1)}{4(1+2\hbar^2\deltat_1)}.
\end{align*}
In the case when $\mu/\sqrt{\ch}>1/\cos\theta$ one can have two-sided
infinite dimensional representations by letting $\theta$ be an
irrational multiple of $\pi$; this is not possible for the sphere.
Let us summarize the representations for compact surfaces in the
following table:

\renewcommand{\arraystretch}{1.5}
\begin{center}
  \begin{tabular}{|l|l|}
    \hline
    $\Sigma=C^{-1}(0)$     & Irreducible $\ast$-representations $\phi$ \\ \hline\hline
    Sphere                 & String representations \\ \hline
    Critical torus         & String and loop representations. \\ \hline
    (Non-critical) torus   & Loops, two-sided infinite representations.\\ \hline
  \end{tabular}  
\end{center}
\renewcommand{\arraystretch}{1}
\noindent As an example, let us construct an 11-dimensional loop
representation when the surface is a torus. More precisely, we set
$\theta=\pi/11$, $\hbar=\tan(\theta)$, $\tr A=2\cos 2\theta$, $a=1/2$,
$\ch=1$ and $\deltat_1=1/2$, which corresponds to $\alpha_0\approx
1.99$, $\alpha_1\approx 3.15$ and $c=1$. In Figure \ref{fig:looprep}
one finds the corresponding constraint curve and the 11 points of
iteration of the affine map $\Lh$. Let $\xv_1$
(e.g. $\approx(1.56,1)$) be an initial point on the curve and let
$\xv_k=(d_k,\dt_k)=L^{k-1}(\xv_1)$ be its iterations. A
$\ast$-representation of $\CLa$ is then constructed by setting
\begin{equation*}
  \phi(W) = 
  \begin{pmatrix}
    0               &  \sqrt{d_1}  & 0                & \cdots & 0 \\
    0               &  0                & \sqrt{d_2}  & \cdots & 0 \\
    \vdots          & \vdots            & \ddots           & \ddots & \vdots\\
    0               & 0                 & \cdots           & 0 &  \sqrt{d_{10}} \\
    \sqrt{d_{11}}               & 0                 & \cdots           & 0 & 0
  \end{pmatrix}
  \approx
  \begin{pmatrix}
    0               &  1.25  & 0                & \cdots & 0 \\
    0               &  0                & 1.46  & \cdots & 0 \\
    \vdots          & \vdots            & \ddots           & \ddots & \vdots\\
    0               & 0                 & \cdots           & 0 &  0.79 \\
    1.00               & 0                 & \cdots           & 0 & 0
  \end{pmatrix}.
\end{equation*}

\begin{figure}[t]
  \centering
  \includegraphics[height=4cm]{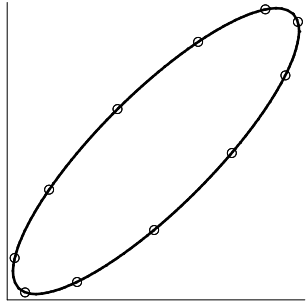}  
  \caption{The constraint curve and the points of iterations for an
    11-dimensional loop representation when $\alpha_0\approx 1.99$,
    $\alpha_1\approx 3.15$ and $c=1$.}
  \label{fig:looprep}
\end{figure}

\subsection{Non-compact surfaces}

\noindent The remaining surfaces will have one or two non-compact
components (except for the surfaces in Section \ref{sec:cptNoncpt},
which has both a compact and a non-compact component) and we will show
that infinite representations always exist, whereas all finite
dimensional representations are one-dimensional. By looking at the
tables in Appendix B, one sees that non-compact surfaces appear only
when $\alpha_1\leq 0$ which, for small $\hbar$, is equivalent to
$\Delta\leq 0$. We can now prove the following result about the
relation between geometry and representations.

\begin{proposition}\label{prop:NonCompact}
  Let $\CLa$ be an algebra  corresponding to a surface $\Sigma$
  where each component is non-compact, and assume that at least one of
  $\alpha_0,\alpha_1,c_0$ is  different from zero.  Then the following
  holds:
  \begin{enumerate}
  \item All finite dimensional irreducible representations have dimension one.
  \item If $\Sigma$ has two components then there exists two
    inequivalent one-sided infinite dimensional irreducible
    representations, but no two-sided representations.
  \item If $\Sigma$ is connected and non-singular, then there exists a
    two-sided infinite dimensional irreducible representation; if
    $a\leq 0$, or $a>0$ and $c>\mu^2(1+|\Delta|/4)$, then no one-sided
    representations exist. If $a>0$ and $c\leq\mu^2(1+|\Delta|/4)$
    then one-sided representations exist.
  \end{enumerate}
\end{proposition}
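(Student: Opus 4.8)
The plan is to translate each geometric/topological statement about $\Sigma$ into a dynamical statement about the affine map $\Lh$ acting on the constraint curve $\Gamma$, and then invoke the representation-theoretic dictionary of Theorem \ref{thm:CLaRep}: finite dimensional irreducible representations correspond to loops (periodic orbits in $\Rtwoplus$) and strings ($k$-strings hitting both positive axes), while the two infinite dimensional classes correspond to one-sided orbits starting on a positive axis and two-sided orbits lying entirely in $\Rtwoplus$. Since non-compactness forces $\alpha_1\leq 0$, hence (for small $\hbar$) $\Delta\leq 0$, I can immediately apply Lemma \ref{lemma:LhNoPeriods} to rule out nontrivial loops in all cases, which is half of part (1).

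For part (1) I would finish by excluding nontrivial strings. When $a\geq 0$ this is exactly Lemmas \ref{lemma:nostring1} and \ref{lemma:nostring2}. When $a<0$ (or more generally), the constraint curve \eqref{eq:pNonzero}, \eqref{eq:pZero} is a hyperbola (since $\alpha_1<0$ makes the $(r-s)^2$ coefficient negative, or in the $\alpha_1=0$ parabolic case), and I would argue geometrically as in Proposition \ref{prop:CLaOneDimRep}: either no component of $\Gamma$ meets both positive axes, or iterating $\Lh$ from the positive $r$-axis strictly increases a monotone coordinate so the positive $s$-axis is never reached. This uses the direction-preserving property of $\Lh$ along the components noted after \eqref{CLaCurve}, together with the explicit iteration formula of Proposition \ref{prop:Lhit}.

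For parts (2) and (3) I would analyze the components of $\Gamma$ in $\Rtwoplus$ directly. When $\Sigma$ has two components, $\Gamma$ has two disjoint branches, each left invariant by $\Lh$ (again by the invariance after \eqref{CLaCurve}); on each branch a generic point has a two-sided orbit that eventually leaves $\Rtwopluszero$ through an axis in backward or forward time (so no full two-sided representation exists), but starting from the positive-axis intersection yields a one-sided orbit staying in $\Rtwoplus$, giving exactly two inequivalent one-sided representations. When $\Sigma$ is connected and non-singular, $\Gamma$ is a single smooth branch; here I would show a point can be chosen in $\Rtwoplus$ whose entire $\integers$-orbit stays in $\Rtwoplus$, producing a two-sided representation, and then determine existence of one-sided representations by checking whether the branch meets the positive $r$-axis and whether iterating from there stays positive. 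The threshold $c>\mu^2(1+|\Delta|/4)$ should emerge from substituting the axis-intersection point into \eqref{eq:pNonzero} and comparing with the eigenvalue behavior from $\lpm=(\tr A\pm\sqrt{(\tr A)^2-4})/2$.

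The main obstacle will be part (3): pinning down the exact inequality $c\leq\mu^2(1+|\Delta|/4)$ separating existence and non-existence of one-sided representations. This requires a careful quantitative analysis of where the hyperbola $\Gamma$ crosses the positive $r$-axis (the only place a one-sided representation can begin, by Lemma \ref{lemma:transrec}) and whether forward iteration under $\Lh$ stays in $\Rtwoplus$ rather than escaping to negative coordinates. Since $\det A=1$ need not hold in the non-compact regime, I cannot rely on the Casimir normal form used in the compact case, so I expect to work directly with the iteration formula of Proposition \ref{prop:Lhit} and track the sign of the relevant coordinate, which is where the bulk of the computational effort will lie.
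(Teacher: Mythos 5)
Your plan follows the paper's proof essentially step for step: part (1) is exactly Proposition \ref{prop:CLaOneDimRep} (whose proof consists of the loop/string exclusions you describe via Lemmas \ref{lemma:LhNoPeriods}, \ref{lemma:nostring1} and \ref{lemma:nostring2}), and parts (2)--(3) are handled, as in the paper, by a case-by-case analysis of which components of $\Gamma$ meet $\Rtwopluszero$, using the axis crossings $r_\pm=\half\bracketb{a\pm\sqrt{a^2+4\ch_1}}$, Lemma \ref{lemma:transrec}, and the direction-preserving action of $\Lh$ along the curve. Two small corrections: the constraint-curve framework of Section \ref{sec:repsurf} already presupposes that the Casimir is central (so $\det A=1$) --- non-compactness corresponds to $\Delta\leq 0$, i.e.\ $\tr A\geq 2$, not to $\det A\neq 1$, so the obstacle you flag there is not real --- and to exclude two-sided representations in part (2) you must show that \emph{every} orbit leaves $\Rtwoplus$, not just a generic one, which the paper does by computing $\Lh(0,r_+)=(r_-,0)$ with $r_-<0$.
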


\begin{proof}
  Statement 1 follows immediately from Proposition
  \ref{prop:CLaOneDimRep}.  Statement 2 can be proved in the following
  way: By examining all cases in Appendix B where $C^{-1}(0)$ has two
  non-compact components (Z.8, N.1, N.2, N.5, N.8), one sees that the
  components of $\Gamma$ which intersect $\Rtwopluszero$ has the
  following form
  \begin{center}
    \includegraphics[width=2cm]{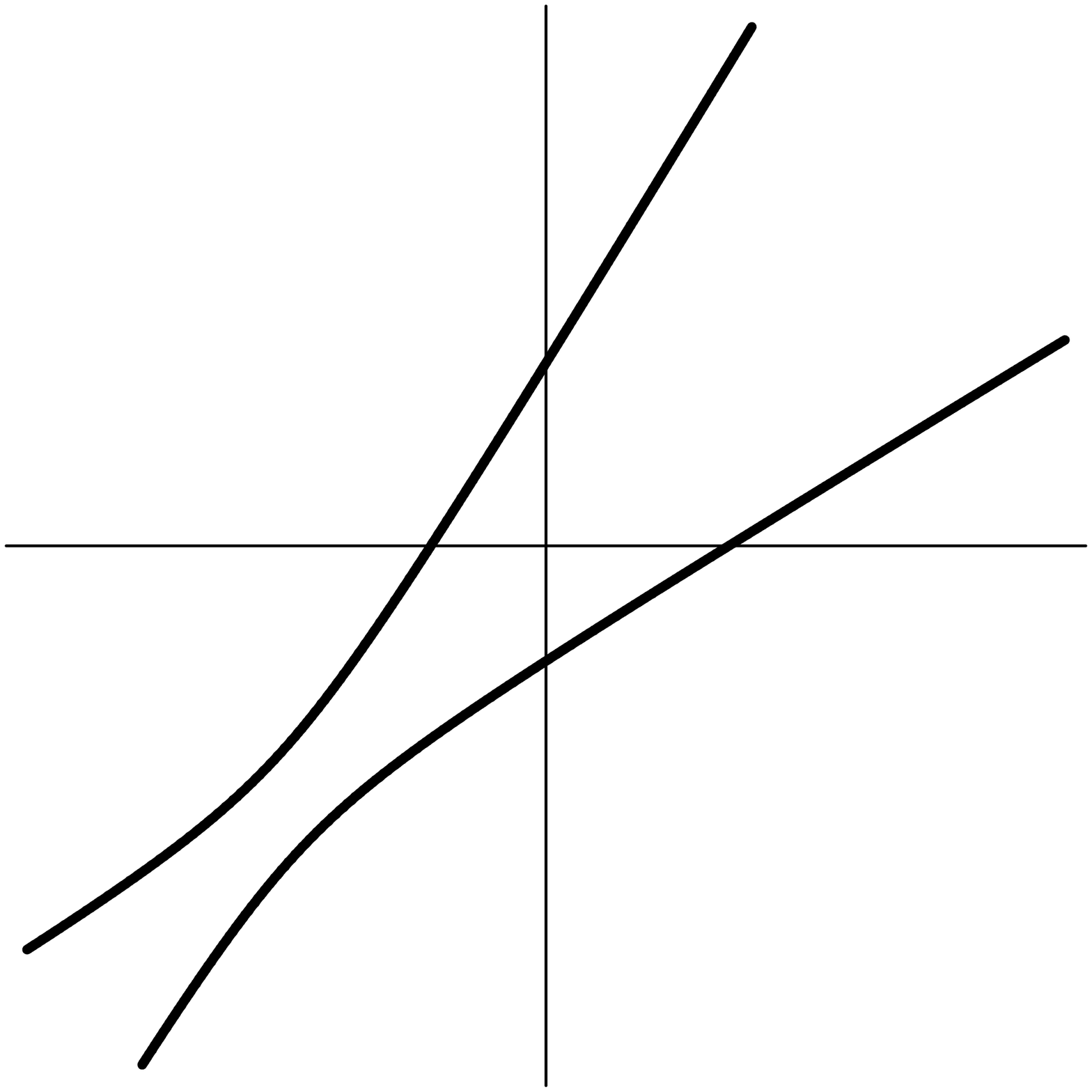}\hspace{20mm}
    \includegraphics[width=2cm]{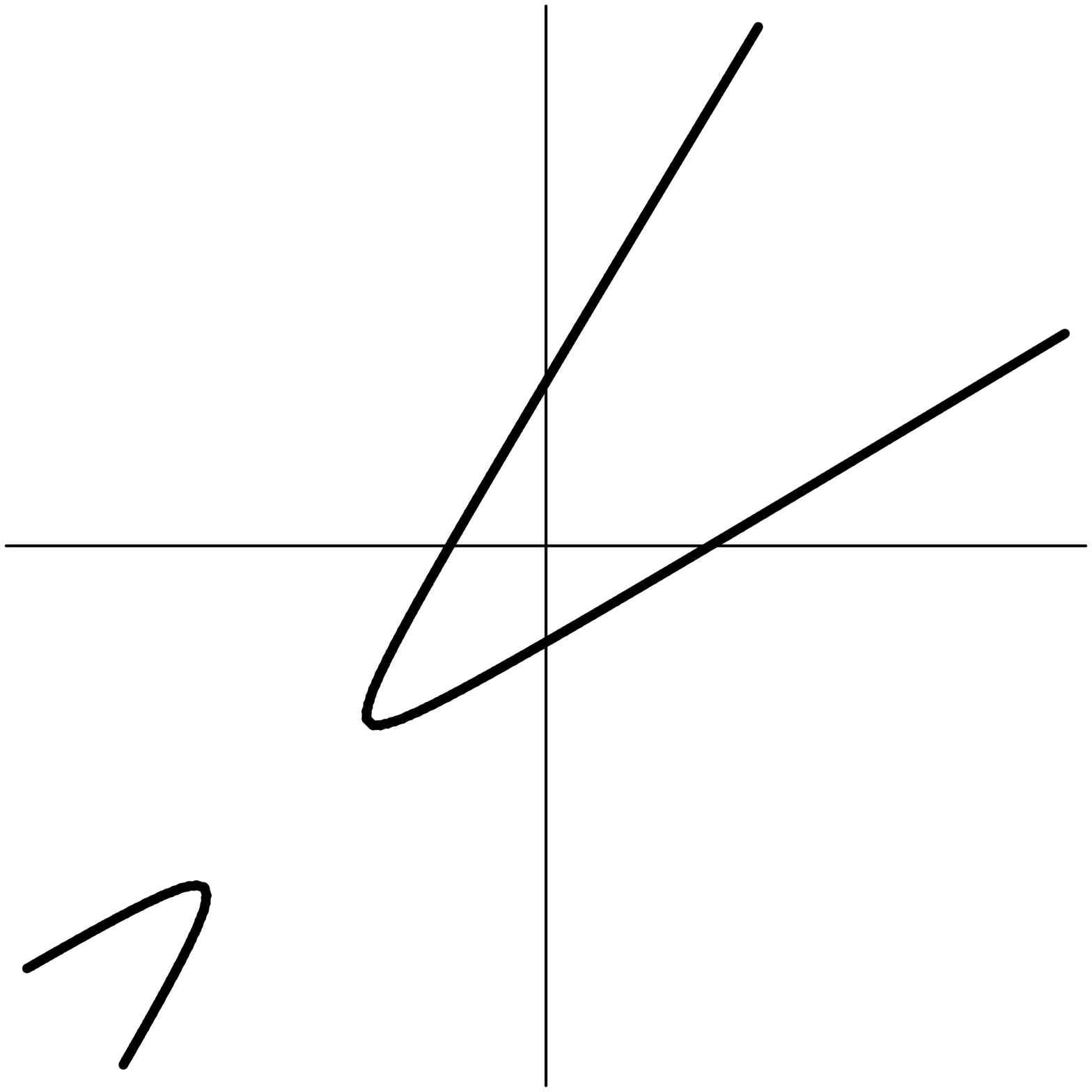}
  \end{center}
  In the first case, two inequivalent one-sided representations can be
  constructed but no two-sided representations can exist because
  backward or forward iterations of any point will eventually reach
  outside $\Rtwoplus$. In the second case, it holds that the lower
  left tip of the curve intersecting $\Rtwopluszero$ has strictly
  negative coordinates and the curve intersects the positive axes
  exactly once. This immediately allows for a construction of two
  inequivalent one-sided infinite dimensional representations. Now,
  can we have two-sided representations? Any constraint curve will
  cross the positive $r$-axis in the following points:
  \begin{equation}\label{eq:axesCrossing}
    r_{\pm} = \half\bracketb{a\pm\sqrt{a^2+4\ch_1}}
  \end{equation}
  If there is only one strictly positive intersection-point, it must
  hold that $r_+>0$ but $r_-\leq 0$. Actually $r_-<0$ since the lower
  left tip of the curve is not in $\Rtwopluszero$. A necessary
  condition for a two-sided representations to exist is that there
  exists a point on the curve such that all backward and forward
  iterations by $\Lh$ is contained in $\Rtwoplus$. This means that
  (since $\Lh$ preserves the direction of the curve) when we apply
  $\Lh$ to the point of intersection with the $s$-axis, we must obtain
  a point in $\Rtwopluszero$ (otherwise no point is able to ``jump''
  the negative part of $\Gamma$ by the action of $\Lh$). But this does
  not happen since
  \begin{align*}
    \Lh\twovec{0}{r_+}=\twovec{-r_++a}{0}=\twovec{r_-}{0}\notin\Rtwopluszero.
  \end{align*}
  Let us now prove the statement 3. When $a\leq 0$ the are only two
  cases which give a connected non-singular non-compact surface,
  namely N.7 and N.10. In both cases, one can check that $\Gamma$ does
  not intersect the axes, and that at least one component is contained
  in $\Rtwoplus$. Hence, no one-sided representations exist but
  two-sided representations exist.

  When $a>0$ (Z.6, N.4) and $c>\mu^2(1+|\Delta|/4)$ then the component
  of $\Gamma$ that intersects $\Rtwopluszero$ is contained in
  $\Rtwoplus$, which implies that no one-sided representations exist,
  but two-sided representations exist.  When $a>0$ and
  $c\leq\mu^2(1+|\Delta|/4)$ one component of the constraint curve
  will have the following form
  \begin{center}
    \includegraphics[width=2cm]{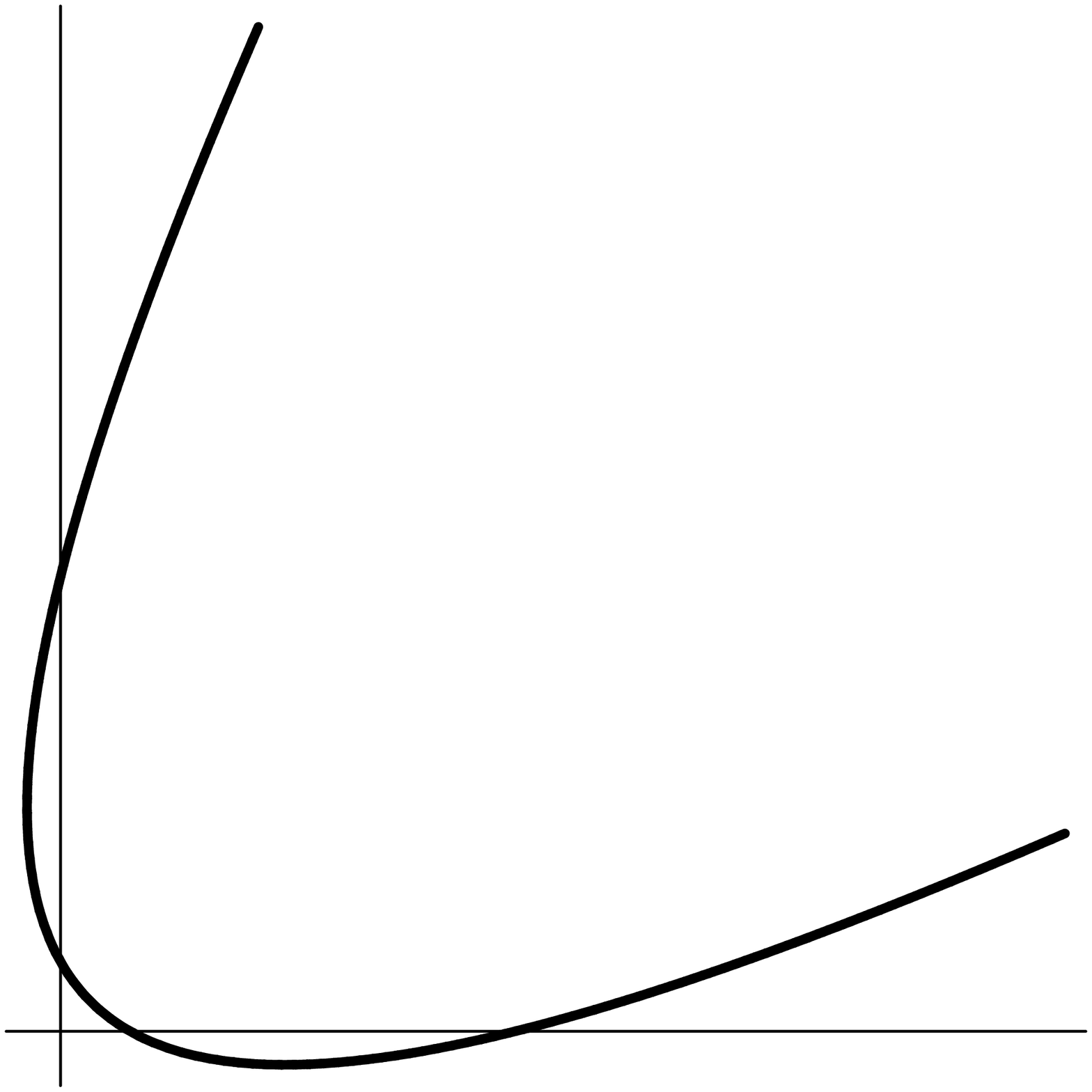}
  \end{center}
  In particular, it intersects the positive axes at least once. Thus,
  one-sided representations can be easily defined, but what about
  two-sided representations? We will now show that the backward and
  forward iteration of the point at the lower left tip both lie in
  $\Rtwoplus$. We consider only the case Z.6 as the other case (N.4)
  can be treated analogously. The lower tip of the ellipse has
  coordinates $(r_0,r_0)$ for some $r_0>0$. We calculate
  \begin{align*}
    \Lh\twovec{r_0}{r_0}=\twovec{r_0+a}{r_0}\in\Rtwoplus\quad\text{ and }\quad
    \Lh^{-1}\twovec{r_0}{r_0} = \twovec{r_0}{r_0+a}\in\Rtwoplus,
  \end{align*}
  since $a>0$. Hence, we can define a two-sided representation by
  starting at $(r_0,r_0)$ and considering all backward and forward
  iterations by $\Lh$.
\end{proof}
\noindent The connected non-compact surfaces in Proposition
\ref{prop:NonCompact} which allow for one-sided infinite dimensional
representations correspond to surfaces with a narrow ``throat'' as in
Figure \ref{fig:narrowThroat}.
\begin{figure}[t]
  \centering
  \includegraphics[height=4cm]{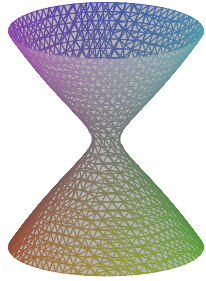}  
  \caption{A connected non-compact surfaces which allows for one-sided representations.}
  \label{fig:narrowThroat}
\end{figure}
This kind of ``tunneling'' is analogous to what happens for the
compact surfaces. For a compact torus with a narrow hole, string
representations can continue to exist; but as the hole grows wider
they cease to exist. For non-compact surfaces, one-sided
representations continue to exist even though the two components have
been joined together. As an example, let us consider a surface with
$\alpha_0=-1$, $\alpha_1=-1$ and $c=1.02$. Choosing $\hbar=0.3$,
$\deltat_1=\betat_1=-1/4$ and $\gammat_1=-1/2$ gives us the constraint
curve in Figure \ref{fig:onesided_twosided_rep}. On the left curve,
the first iterations of a one-sided infinite dimensional
representations are plotted; on the right curve one finds iterations
corresponding to a two-sided representation. The representations that
are defined by these two figures the following form:
\begin{align*}
  \phi_{\text{\small one-sided}}(W) \approx 
  \begin{pmatrix}
    0 & 0.41 &\\
    & 0 & 0.74 \\ 
    & & 0 & 1.11 \\ 
    & & & 0 & 1.53 \\ 
    & & & & \ddots & \ddots
  \end{pmatrix}
\end{align*}

\begin{align*}
  \phi_{\text{\small two-sided}}(W) \approx 
  \begin{pmatrix}
    \ddots & \ddots &\\
    & 0 & 0.45 \\ 
    & & 0 & 0.01 \\ 
    & & & 0 & 0.01 \\ 
    & & & & 0 & 0.45 \\ 
    & & & & & 0 & 0.79 \\ 
    & & & & & & \ddots & \ddots
  \end{pmatrix}.
\end{align*}

\begin{figure}[h]
  \centering
  \includegraphics[width=4cm]{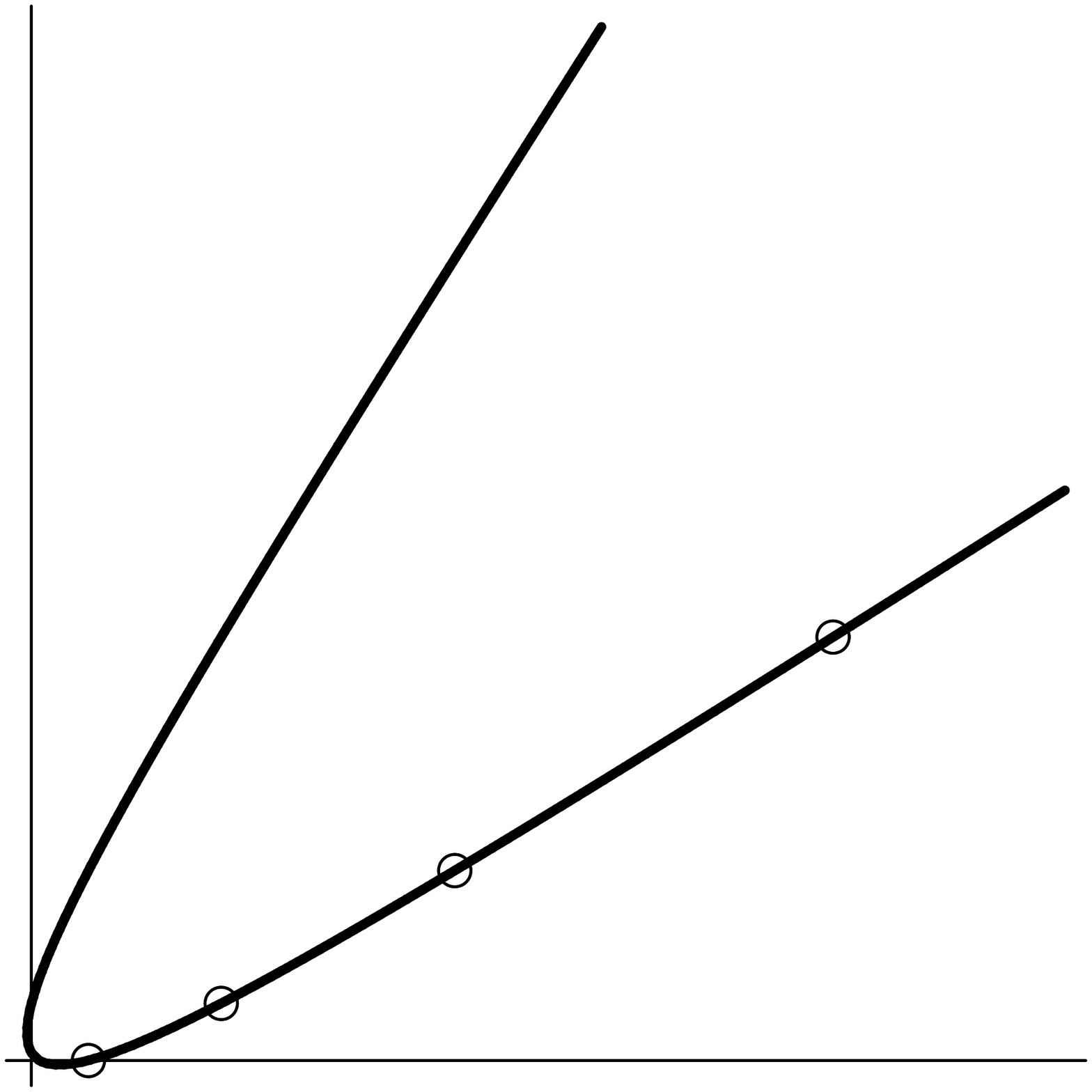}
  \hspace{20mm}
  \includegraphics[width=4cm]{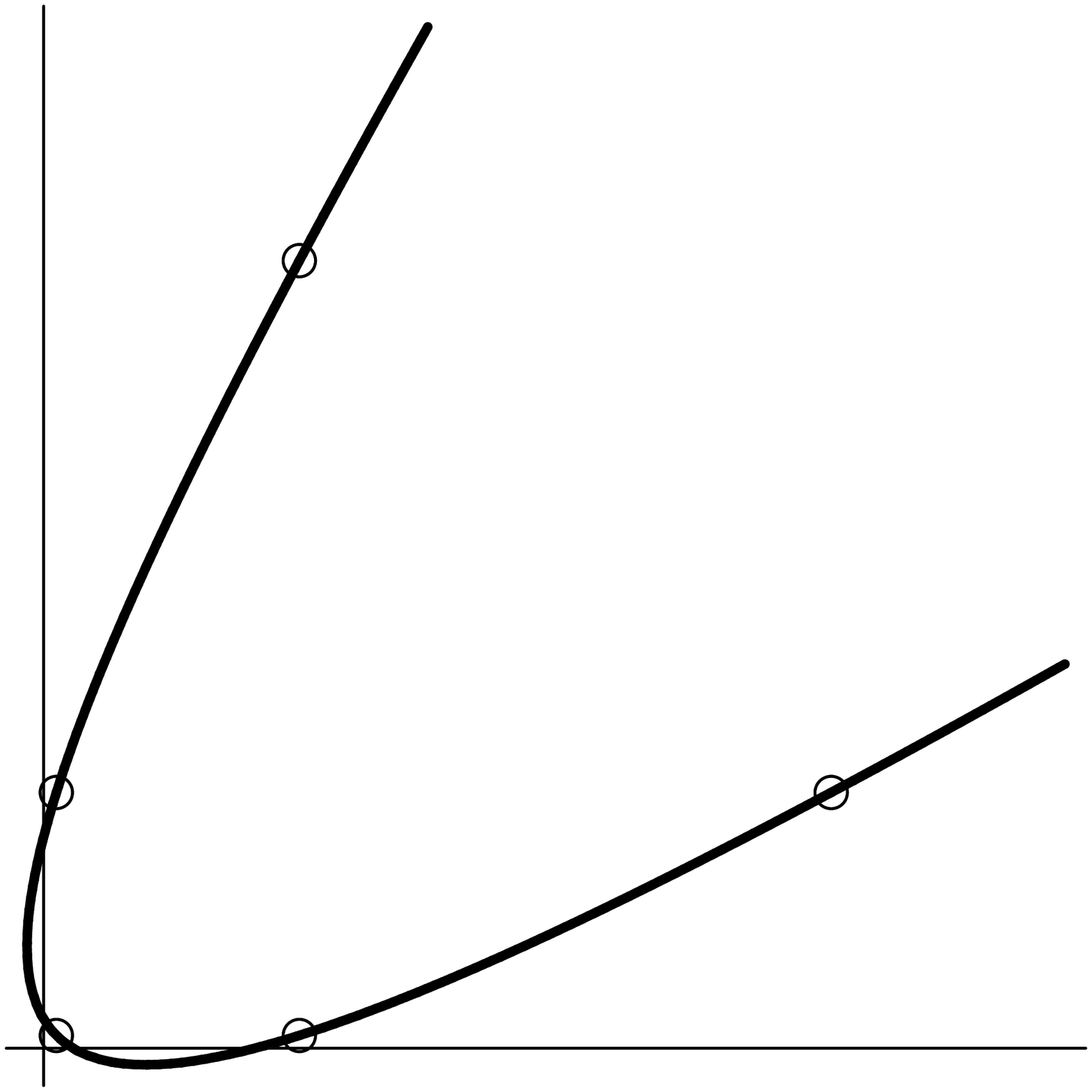}
  \caption{Constraint curves and iterations corresponding to a
    non-compact one-sheeted surfaces with $\alpha_0=-1$, $\alpha_1=-1$
    and $c=1.02$.}
  \label{fig:onesided_twosided_rep}
\end{figure}

\noindent Let us make a remark about the surface that has been excluded from
Proposition \ref{prop:NonCompact}, namely Z.2 with $c_0=0$. The
polynomial reduces to $C(x,y,z)=z^2$ and the inverse image describes
the $x,y$-plane. The constraint curve is the line $r-s=0$ and every
point $(r,r)$ is a fix-point of $\Lh$. Hence, all irreducible
representations are one-dimensional and the inequivalent
representations are parametrized by the non-negative real numbers.

\subsection{A surface with both compact and non-compact components}\label{sec:cptNoncpt}

In the cases N.11 and N.12, the surface consists of two components: a
non-compact surface and a compact surface of genus 0 (which collapses
to a point when $\mu/\sqrt{c}=1$). The constraint curve will have the
form as in Figure \ref{fig:cptNoncptSurface}.
\begin{figure}[h]
  \centering
  \includegraphics[height=4cm]{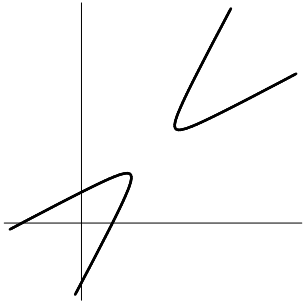}  
  \caption{The constraint curve corresponding to a surface with both compact and non-compact components.}
  \label{fig:cptNoncptSurface}
\end{figure}
Thus, there is one component which allows for the construction of a
finite dimensional string representation, and one component that
induces a two-sided infinite dimensional representation. When
$\mu/\sqrt{c}=1$, the lower component will intersect $\Rtwopluszero$
in the point $(0,0)$, which allows for a one-dimensional trivial
representation.

As for the compact surfaces, string representations do not exist for
all values of $\hbar$ (if we fix $\ch$). In the notation of
Proposition \ref{prop:CLaThetaParam}, the condition for the existence
of a $n$-dimensional string representation is
\begin{equation*}
  \mu\cosh\theta-\sqrt{\ch}\cosh n\theta=0.
\end{equation*}

\subsection{Singular non-compact surfaces}

\noindent The surfaces Z.7, N.3, N.6 have a singularity at one point
(which arises as two sheets come together) and the surface N.9 is a
limit case of N.12, where the sphere touches the non-compact
surface. The corresponding constraint curves will have one of the
forms in Figure \ref{fig:singSurfaces},
\begin{figure}[h]
  \centering
  \includegraphics[width=3cm]{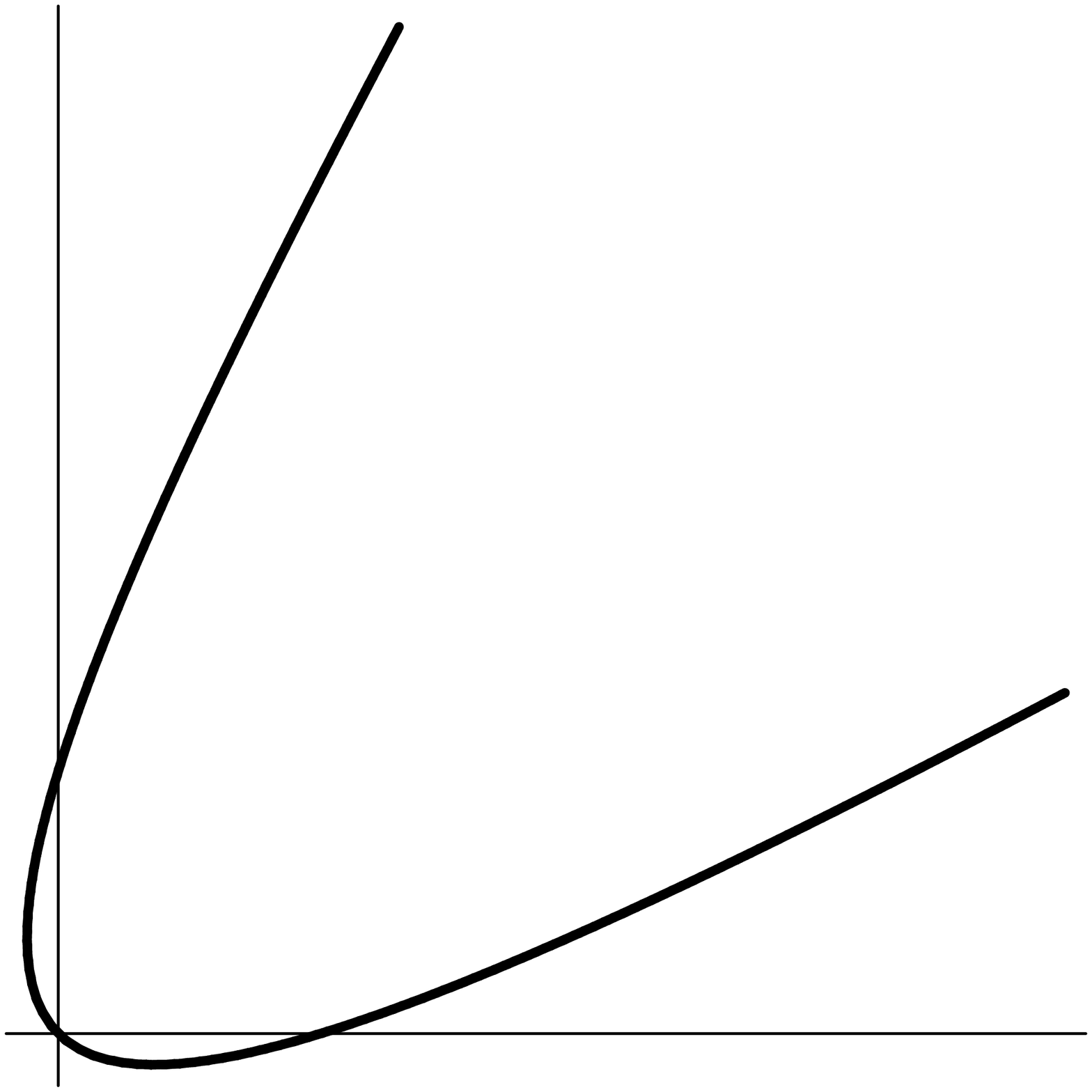}\hspace{20mm}
  \includegraphics[width=3cm]{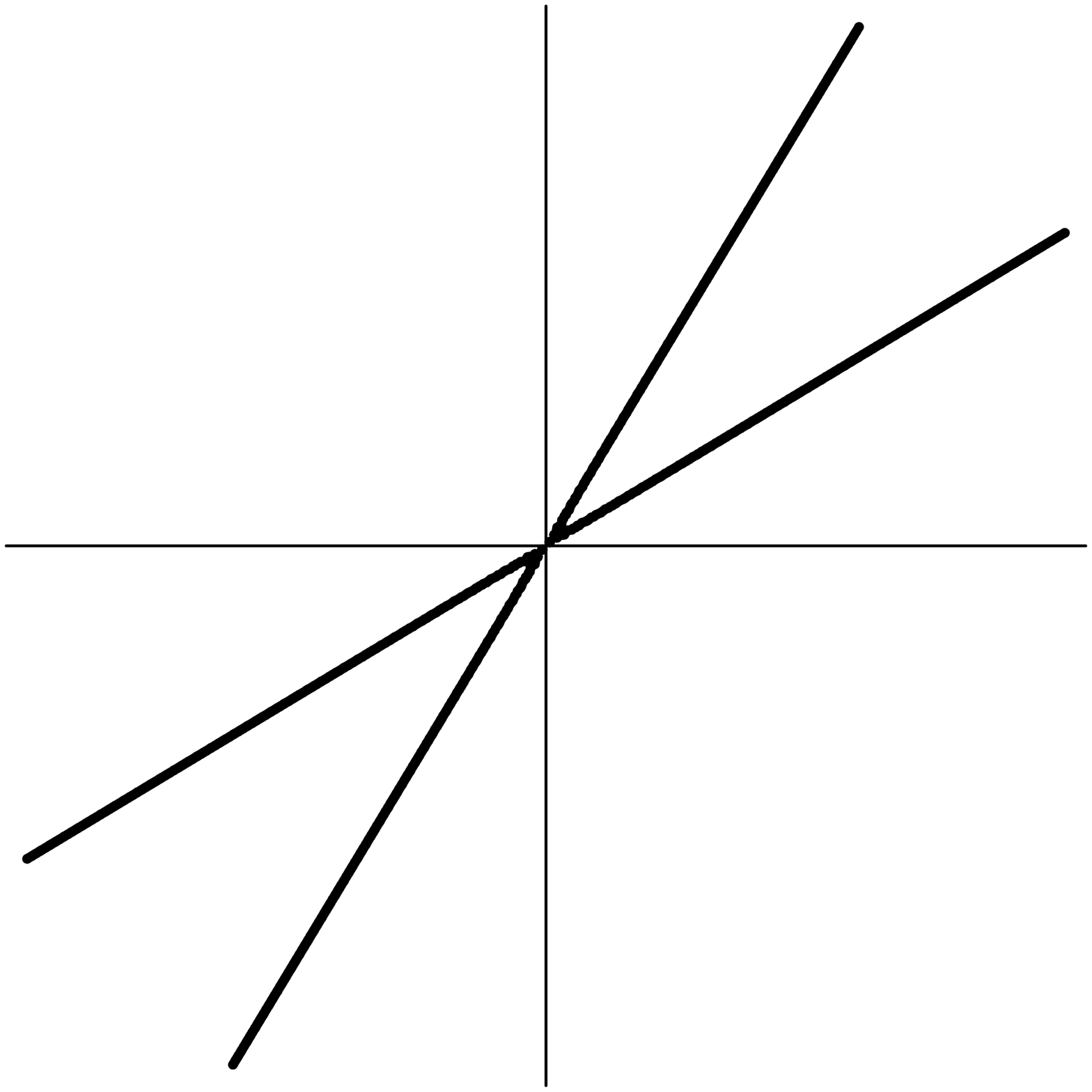}  
  \caption{Constraint curves of singular non-compact surfaces.}
  \label{fig:singSurfaces}
\end{figure}
where the left picture corresponds to Z.7 and N.3, and the right
picture corresponds to N.6 and N.9. The left constraint curve clearly
allows for two one-sided representations, but no two-sided
representation can exist (cp. proof of Proposition
\ref{prop:NonCompact}). The right constraint curve allows for two
different two-sided representations; it is easy to check that since
$\Lh$ is invertible and $(0,0)$ is a fix-point, all iterations of a
point on the curve in $\Rtwoplus$ stay in $\Rtwoplus$. That is,
iterations approach the origin but never reach it. Furthermore, we
note that no finite dimensional representations of dimension greater
than one can exist.

\subsection{Correspondence between geometry and representation theory}

\noindent In \cite{abhhs:noncommutative}, the representation theory of
$\CLa$ was compared with the geometry of the inverse image for a class
of compacts surfaces of genus 0 and 1. We have now extended this
analysis to inverse images of general rotationally symmetric fourth
order polynomials. Apart from recovering earlier results, we have
shown that the representation theory respects the geometry of the
surface to a high extent. Namely, in all cases where
$\Sigma=C^{-1}(0)$ is empty, no representations exist. When $\Sigma$
is not a surface, then no irreducible representations of dimension
greater than one exist. In the case when $\Sigma$ is non-singular
non-compact, the correspondence is as follows: if $\Sigma$ has two
sheets then there exists two inequivalent \emph{one-sided} infinite
dimensional representations, and if $\Sigma$ has one sheet there is a
\emph{two-sided} infinite dimensional representation. In all
non-compact cases no finite-dimensional representations of dimension
greater than one exist.

\section*{Acknowledgement}

\noindent This work was partially supported by the Swedish Research Council, the
Crafoord Foundation, the Swedish Royal Academy of Sciences and the
Swedish Foundation of International Cooperation in Research and Higher
Education (STINT). J. A. would also like to thank the Sonderforschungsbereich
SFB647 as well as the Institut des Hautes
{\'E}tudes Scientifiques for financial support and hospitality.

\newpage

\section*{Appendix A -- Solutions to system of equations}

\noindent The general solution to the four equations
\begin{align*}
  &\Amatrix
  \twomatrix{m}{n}{\mt}{\nt}-
  \twomatrix{m}{n}{\mt}{\nt}
  \twomatrix{\tr A}{-\det A}{1}{0}=0
\end{align*}
is given by
\begin{equation*}
  n = \beta\mt- \delta m\qquad
  \nt = \gamma m-\alpha\mt.
\end{equation*}
If $\Delta=1+\det A-\tr A\neq 0$ then the system
\begin{align}\label{eq:kkt}
  \bracketc{\Amatrix-\mid_2}
  \twovec{k}{\kt} = 
  \twomatrix{m}{n}{\mt}{\nt}\twovec{a}{0}-\twovec{u}{v}
\end{align}
has a unique solution for $k$ and $\kt$. Whenever $a\neq 0$, we can
always solve \eqref{eq:kkt} by setting
\begin{equation*}
  m=\frac{1}{a}\bracketb{(\alpha-1)k+\beta\kt+u}\qquad 
  \mt=\frac{1}{a}\bracketb{\gamma k+(\delta-1)\kt+v}.
\end{equation*}
If $\Delta=a=0$ there are two cases. When $A=\mid_2$ it is necessary
that $u=v=0$, in which case \eqref{eq:kkt} is identically
satisfied and the affine map $L$ will be the identity map. If
$A\neq\mid_2$ we get the following conditions
\begin{align*}
  &\alpha\neq 1:\qquad \text{if }(\alpha-1)v=\gamma u\text{ then }k=\frac{1}{1-\alpha}\paraa{u+\beta\kt}\\
  &\gamma\neq 0:\qquad \text{if }(\alpha-1)v=\gamma u\text{ then }k=-\frac{1}{\gamma}\paraa{v+(\delta-1)\kt}\\
  &\delta\neq 1:\qquad \text{if }(\delta-1)u=\beta v\text{ then }\kt=\frac{1}{1-\delta}\paraa{v+\gamma k}\\
  &\beta\neq 0:\qquad \text{if }(\delta-1)u=\beta v\text{ then }\kt=-\frac{1}{\beta}\paraa{u+(\alpha-1)k}.
\end{align*}

\newpage
\section*{Appendix B -- Inverse images of $C(x,y,z)$}

\renewcommand{\arraystretch}{1.3}
\begin{flushleft}
\begin{tabular}{|l|c|c|c||l|l|}
  \hline
  $\alpha_1>0$ & $\alpha_0$ & $c$     & $\mu/\sqrt{c}$     &$C^{-1}(0)$ & $\Gamma\cap\Rtwopluszero$\\ \hline 
  P.1 & --         & $<0$    & --       & $\emptyset$ & $\emptyset$\\ \hline
  P.2 & $0$        & $0$     & --       & $\{(0,0,0)\}$ & $\{(0,0)\}$\\ \hline
  P.3 & $<0$       & $0$     & --       & $\{(x,y,0): x^2+y^2=|\alpha_0|/\alpha_1\}$ & $\left\{\frac{|\alpha_0|}{\alpha_1}(1,1)\right\}$\\ \hline\hline
  P.4 & $>0$       & $0$     & --       & $\emptyset$ & $\emptyset$\\ \hline
  P.5 & $>0$       & $>0$    & $<-1$    & $\emptyset$ & $\emptyset$ \\ \hline
  P.6 & $>0$       & $>0$    & $-1$     & $\{(0,0,0)\}$ & $\{(0,0)\}$\\ \hline
  P.7 & $\geq 0$       & $>0$    & $>-1$    & Sphere & $\{\text{Ellipse}\}\cap\reals^+_0$ \\ \hline\hline
  P.8 & $<0$       & $>0$    & $<1$     & Sphere & $\{\text{Ellipse}\}\cap\reals^+_0$ \\ \hline
  P.9 & $<0$       & $>0$    & $1$      & Surface with singularity & $\{\text{Ellipse}\}\cap\reals^+_0$ \\ \hline
  P.10 & $<0$       & $>0$    & $>1$     & Torus & $\{\text{Ellipse}\}\cap\reals^+_0$ \\ \hline
\end{tabular}

\vspace{4mm}

\renewcommand{\arraystretch}{1.3}
\begin{tabular}{|l|c|c||l|l|} 
  \hline
  $\alpha_1=0$ & $\alpha_0$ & $c_0$     & $C^{-1}(0)$ & $\Gamma\cap\Rtwopluszero$\\ \hline
  Z.1 & $0$        & $<0$      & $\emptyset$ & $\emptyset$\\ \hline
  Z.2 & $0$        & $\geq 0$  & $\{(x,y,\sqrt{c_0})\}\cup\{x,y,-\sqrt{c_0}\}$ & Non-compact.\\ \hline\hline
  Z.3 & $>0$       & $<0$      & $\emptyset$ & $\emptyset$\\ \hline
  Z.4 & $>0$       & $0$       & $\{(0,0,0)\}$ & $\{(0,0)\}$\\ \hline
  Z.5 & $>0$       & $>0$      & Sphere & Compact. \\ \hline\hline
  Z.6 & $<0$       & $<0$      & One sheeted hyperboloid & Non-compact.\\ \hline
  Z.7 & $<0$       & $0$       & Singular hyperboloid & Non-compact.\\ \hline
  Z.8 & $<0$       & $>0$      & Two sheeted hyperboloid & Non-compact.\\ \hline
\end{tabular}

\vspace{4mm}

\begin{tabular}{|l|c|c|c||l|} 
  \hline
  $\alpha_1<0$ & $\alpha_0$   & $c$   & $\mu/\sqrt{c}$  & $C^{-1}(0)$ \\ \hline
  N.1 & $<0$         & $\leq0$  & --       & Two sheeted cone. \\ \hline
  N.2 & $<0$         & $>0$  & $<-1$     & Two sheeted cone. \\ \hline
  N.3 & $<0$         & $>0$  & $-1$      & One sheeted singular cone.  \\ \hline
  N.4 & $<0$         & $>0$  & $>-1$     & One sheeted cone.  \\ \hline\hline
  N.5 & $0$         & $<0$  & --       & Two sheeted cone.  \\ \hline
  N.6 & $0$         & $0$   & --       & One sheeted singular cone.  \\ \hline
  N.7 & $0$         & $>0$  & --       & One sheeted cone.  \\ \hline\hline
  N.8 & $>0$        & $<0$  & --       & Two sheeted cone. \\ \hline
  N.9 & $>0$        & $0$   & --       & One sheeted cone $\cup$ sphere (singular). \\ \hline
  N.10 & $>0$        & $>0$  & $<1$     & One sheeted cone.  \\ \hline
  N.11 & $>0$        & $>0$  & $1$      & One sheeted cone $\cup$ $\{(0,0,0)\}$. \\ \hline
  N.12 & $>0$        & $>0$  & $>1$     & One sheeted cone $\cup$ sphere.  \\ \hline
\end{tabular}
\end{flushleft}

\vspace{3mm}

\bibliographystyle{alpha}
\bibliography{crossedalg}

\newcommand{\etalchar}[1]{$^{#1}$}
\begin{thebibliography}{{\"O}STAV08}

\bibitem[ABH{\etalchar{+}}09]{abhhs:noncommutative}
Joakim Arnlind, Martin Bordemann, Laurent Hofer, Jens Hoppe, and Hidehiko
  Shimada.
\newblock Noncommutative {R}iemann surfaces by embeddings in $\mathbb{R}^3$.
\newblock {\em Comm. Math. Phys. (to appear)}, 2009.
\newblock E-print: \texttt{arXiv:0711.2588}.

\bibitem[Arn08a]{a:phdthesis}
Joakim Arnlind.
\newblock {\em Graph Techniques for Matrix Equations and Eigenvalue Dynamics}.
\newblock PhD thesis, Royal Institute of Technology, 2008.

\bibitem[Arn08b]{a:repCalg}
Joakim Arnlind.
\newblock Representation theory of {$C$}-algebras for a higher-order class of
  spheres and tori.
\newblock {\em J. Math. Phys.}, 49(5):053502, 13, 2008.

\bibitem[AS94]{ArchbordSpielberg}
R.~J. Archbold and J.~S. Spielberg.
\newblock Topologically free actions and ideals in discrete {$C\sp
  *$}-dynamical systems.
\newblock {\em Proc. Edinburgh Math. Soc. (2)}, 37(1):119--124, 1994.

\bibitem[BHSS91]{bhss:glinfinity}
Martin Bordemann, Jens Hoppe, Peter Schaller, and Martin Schlichenmaier.
\newblock {${\rm gl}(\infty)$} and geometric quantization.
\newblock {\em Comm. Math. Phys.}, 138(2):209--244, 1991.

\bibitem[BR79]{BR1}
Ola Bratteli and Derek~W. Robinson.
\newblock {\em Operator algebras and quantum statistical mechanics. {V}ol. 1}.
\newblock Springer-Verlag, New York, 1979.
\newblock $C\sp{\ast} $- and $W\sp{\ast} $-algebras, algebras, symmetry groups,
  decomposition of states, Texts and Monographs in Physics.

\bibitem[BR81]{BR2}
Ola Bratteli and Derek~W. Robinson.
\newblock {\em Operator algebras and quantum-statistical mechanics. {II}}.
\newblock Springer-Verlag, New York, 1981.
\newblock Equilibrium states. Models in quantum-statistical mechanics, Texts
  and Monographs in Physics.

\bibitem[Dav96]{Dav}
Kenneth~R. Davidson.
\newblock {\em {$C\sp *$}-algebras by example}, volume~6 of {\em Fields
  Institute Monographs}.
\newblock American Mathematical Society, Providence, RI, 1996.

\bibitem[Eff65]{EF1}
Edward~G. Effros.
\newblock Transformation groups and {$C\sp{\ast} $}-algebras.
\newblock {\em Ann. of Math. (2)}, 81:38--55, 1965.

\bibitem[Eff81]{EF2}
Edward~G. Effros.
\newblock Polish transformation groups and classification problems.
\newblock In {\em General topology and modern analysis ({P}roc. {C}onf.,
  {U}niv. {C}alifornia, {R}iverside, {C}alif., 1980)}, pages 217--227. Academic
  Press, New York, 1981.

\bibitem[Eff82]{EF3}
Edward~G. Effros.
\newblock On the structure theory of {$C\sp{\ast} $}-algebras: some old and new
  problems.
\newblock In {\em Operator algebras and applications, {P}art {I} ({K}ingston,
  {O}nt., 1980)}, volume~38 of {\em Proc. Sympos. Pure Math.}, pages 19--34.
  Amer. Math. Soc., Providence, R.I., 1982.

\bibitem[Gli61a]{Gl2}
James Glimm.
\newblock Locally compact transformation groups.
\newblock {\em Trans. Amer. Math. Soc.}, 101:124--138, 1961.

\bibitem[Gli61b]{Gl1}
James Glimm.
\newblock Type {I} {$C\sp{\ast} $}-algebras.
\newblock {\em Ann. of Math. (2)}, 73:572--612, 1961.

\bibitem[Jor88]{Jorg-b-1}
Palle E.~T. Jorgensen.
\newblock {\em Operators and representation theory}, volume 147 of {\em
  North-Holland Mathematics Studies}.
\newblock North-Holland Publishing Co., Amsterdam, 1988.
\newblock Canonical models for algebras of operators arising in quantum
  mechanics, Notas de Matem{\'a}tica [Mathematical Notes], 120.

\bibitem[JSW95]{Jorg-art-2}
P.~E.~T. Jorgensen, L.~M. Schmitt, and R.~F. Werner.
\newblock Positive representations of general commutation relations allowing
  {W}ick ordering.
\newblock {\em J. Funct. Anal.}, 134(1):33--99, 1995.

\bibitem[Kar87]{K}
Gregory Karpilovsky.
\newblock {\em The algebraic structure of crossed products}, volume 142 of {\em
  North-Holland Mathematics Studies}.
\newblock North-Holland Publishing Co., Amsterdam, 1987.
\newblock Notas de Matem{\'a}tica [Mathematical Notes], 118.

\bibitem[KTW85]{KTW}
Shinz{\=o} Kawamura, Jun Tomiyama, and Yasuo Watatani.
\newblock Finite-dimensional irreducible representations of {$C\sp
  \ast$}-algebras associated with topological dynamical systems.
\newblock {\em Math. Scand.}, 56(2):241--248, 1985.

\bibitem[Li92]{LiBingRen}
Bing~Ren Li.
\newblock {\em Introduction to operator algebras}.
\newblock World Scientific Publishing Co. Inc., River Edge, NJ, 1992.

\bibitem[Mac68]{MACbook1}
George~W. Mackey.
\newblock {\em Induced representations of groups and quantum mechanics}.
\newblock W. A. Benjamin, Inc., New York-Amsterdam, 1968.

\bibitem[Mac76]{MACbook2}
George~W. Mackey.
\newblock {\em The theory of unitary group representations}.
\newblock University of Chicago Press, Chicago, Ill., 1976.
\newblock Based on notes by James M. G. Fell and David B. Lowdenslager of
  lectures given at the University of Chicago, Chicago, Ill., 1955, Chicago
  Lectures in Mathematics.

\bibitem[Mac89]{MACbook3}
George~W. Mackey.
\newblock {\em Unitary group representations in physics, probability, and
  number theory}.
\newblock Advanced Book Classics. Addison-Wesley Publishing Company Advanced
  Book Program, Redwood City, CA, second edition, 1989.

\bibitem[NVO04]{NastVObook}
Constantin N{\u{a}}st{\u{a}}sescu and Freddy Van~Oystaeyen.
\newblock {\em Methods of graded rings}, volume 1836 of {\em Lecture Notes in
  Mathematics}.
\newblock Springer-Verlag, Berlin, 2004.

\bibitem[NVO08]{NVO}
Erna Nauwelaerts and Freddy Van~Oystaeyen.
\newblock Introducing crystalline graded algebras.
\newblock {\em Algebr. Represent. Theory}, 11(2):133--148, 2008.

\bibitem[OS89]{OS1}
V.~L. Ostrovski{\u\i} and Yu.~S. Samo{\u\i}lenko.
\newblock Unbounded operators satisfying non-{L}ie commutation relations.
\newblock {\em Rep. Math. Phys.}, 28(1):91--104, 1989.

\bibitem[OS99]{OstSam-book}
V.~Ostrovskyi and Yu. Samoilenko.
\newblock {\em Introduction to the theory of representations of finitely
  presented *-algebras. {I}}, volume~11 of {\em Reviews in Mathematics and
  Mathematical Physics}.
\newblock Harwood Academic Publishers, Amsterdam, 1999.
\newblock Representations by bounded operators.

\bibitem[{\"O}S08a]{johan2}
Johan {\"O}inert and Sergei~D. Silvestrov.
\newblock Commutativity and ideals in algebraic crossed products.
\newblock {\em J. Gen. Lie Theory Appl.}, 2:287--302, 2008.

\bibitem[{\"O}S08b]{johan4}
Johan {\"O}inert and Sergei~D. Silvestrov.
\newblock Commutativity and ideals in pre-crystalline graded rings.
\newblock Preprints in Mathematical Sciences 2008:17, Centre for Mathematical
  Sciences, Lund University, 2008.

\bibitem[{\"O}S08c]{johan1}
Johan {\"O}inert and Sergei~D. Silvestrov.
\newblock On a correspondence between ideals and commutativity in algebraic
  crossed products.
\newblock {\em J. Gen. Lie Theory Appl.}, 2(3):216--220, 2008.

\bibitem[{\"O}S09]{johan5}
Johan {\"O}inert and Sergei~D. Silvestrov.
\newblock Crossed product-like and pre-crystalline graded rings.
\newblock In V.~Abramov, E.~Paal, Sergei~D. Silvestrov, and A.~Stolin, editors,
  {\em Generalized Lie theory in Mathematics, Physics and Beyond}, chapter~24,
  pages 281--296. Springer-Verlag, Berlin, Heidelberg, 2009.

\bibitem[{\"O}STAV08]{johan3}
Johan {\"O}inert, Sergei~D. Silvestrov, T.~Theohari-Apostolidi, and
  H.~Vavatsoulas.
\newblock Commutativity and ideals in strongly graded rings.
\newblock Preprints in Mathematical Sciences 2008:13, Centre for Mathematical
  Sciences, Lund University, 2008.

\bibitem[Pas89]{Passman}
Donald~S. Passman.
\newblock {\em Infinite crossed products}, volume 135 of {\em Pure and Applied
  Mathematics}.
\newblock Academic Press Inc., Boston, MA, 1989.

\bibitem[Ped79]{Ped-book}
Gert~K. Pedersen.
\newblock {\em {$C\sp{\ast} $}-algebras and their automorphism groups},
  volume~14 of {\em London Mathematical Society Monographs}.
\newblock Academic Press Inc. [Harcourt Brace Jovanovich Publishers], London,
  1979.

\bibitem[Sak71]{Sak}
Sh{\^o}ichir{\^o} Sakai.
\newblock {\em {$C\sp*$}-algebras and {$W\sp*$}-algebras}.
\newblock Springer-Verlag, New York, 1971.
\newblock Ergebnisse der Mathematik und ihrer Grenzgebiete, Band 60.

\bibitem[Sam91]{Sa}
Yu.~S. Samo{\u\i}lenko.
\newblock {\em Spectral theory of families of selfadjoint operators}, volume~57
  of {\em Mathematics and its Applications (Soviet Series)}.
\newblock Kluwer Academic Publishers Group, Dordrecht, 1991.
\newblock Translated from the Russian by E. V. Tisjachnij.

\bibitem[Sil95]{S-PhD}
Sergei~D. Silvestrov.
\newblock {\em Representations of Commutation Relations. A Dynamical Systems
  Approach}.
\newblock PhD thesis, Departement of Mathematics, Ume{\aa} University, 1995.
\newblock Published in \textit{Hadronic Journal Supplement} vol. \textbf{11}
  116 pp, 1996.

\bibitem[SSdJ07a]{svesildej0}
Christian Svensson, Sergei Silvestrov, and Marcel de~Jeu.
\newblock Dynamical systems and commutants in crossed products.
\newblock {\em Internat. J. Math.}, 18(4):455--471, 2007.

\bibitem[SSdJ07b]{SSJ-1}
Christian Svensson, Sergei~D. Silvestrov, and Marcel de~Jeu.
\newblock Dynamical systems associated with crossed products.
\newblock Preprint in Mathematical Sciences 2007:22, Centre for Mathematical
  Sciences, Lund University, 2007.
\newblock E-print: \texttt{arXiv:0807.2940}. To appear in Acta Applicandae
  Mathematicae.

\bibitem[SSdJ09]{SSD20}
Christian Svensson, Sergei~D. Silvestrov, and Marcel de~Jeu.
\newblock Connections between dynamical systems and crossed products of
  {B}anach algebras by $\mathbb{Z}$.
\newblock In J.~Janas, P.~Kurasov, A.~Laptev, S.~Naboko, and G.~Stolz, editors,
  {\em Methods of Spectral Analysis in Mathematical Physics}, volume 186 of
  {\em Operator Theory: Advances and Applications}, pages 391--401.
  Birkh{\"a}user, Basel, 2009.

\bibitem[ST02]{STom4}
Sergei~D. Silvestrov and Jun Tomiyama.
\newblock Topological dynamical systems of type {I}.
\newblock {\em Expo. Math.}, 20(2):117--142, 2002.

\bibitem[ST08]{svetom0}
Christian Svensson and Jun Tomiyama.
\newblock On the commutant of {$C(X)$} in {$C^*$}-crossed products by
  $\mathbb{Z}$ and their representations.
\newblock Preprint in Mathematical Sciences 2008:21, Centre for Mathematical
  Sciences, Lund University, 2008.
\newblock E-print: \texttt{arXiv:0807.2940}. To appear in J. Funct. Anal.

\bibitem[ST09]{svetom20}
Christian Svensson and Jun Tomiyama.
\newblock On the {B}anach $*$-algebra crossed product associated with a
  topological dynamical system.
\newblock Leiden Mathematical Institute report 2009:03, Leiden Mathematical
  Institute, 2009.
\newblock E-print: \texttt{arXiv:0902.0690}.

\bibitem[SW96]{SW-1}
Sergei~D. Silvestrov and Hans Wallin.
\newblock Representations of algebras associated with a {M}\"obius
  transformation.
\newblock {\em J. Nonlinear Math. Phys.}, 3(1-2):202--213, 1996.
\newblock Symmetry in nonlinear mathematical physics, Vol. 2 (Kiev, 1995).

\bibitem[Tak79]{Tak}
Masamichi Takesaki.
\newblock {\em Theory of operator algebras. {I}}.
\newblock Springer-Verlag, New York, 1979.

\bibitem[Tom87]{Tom1}
Jun Tomiyama.
\newblock {\em Invitation to {$C\sp *$}-algebras and topological dynamics},
  volume~3 of {\em World Scientific Advanced Series in Dynamical Systems}.
\newblock World Scientific Publishing Co., Singapore, 1987.

\bibitem[Tom92]{Tom2}
Jun Tomiyama.
\newblock {\em The interplay between topological dynamics and theory of {$C\sp
  *$}-algebras}, volume~2 of {\em Lecture Notes Series}.
\newblock Seoul National University Research Institute of Mathematics Global
  Analysis Research Center, Seoul, 1992.

\bibitem[VS88]{SaV}
{\`E}.~E. Va{\u\i}sleb and Yu.~S. Samo{\u\i}lenko.
\newblock On the representation of the relations {$AU=UF(A)$} by an unbounded
  selfadjoint operator and a unitary operator.
\newblock In {\em Boundary value problems for differential equations
  ({R}ussian)}, pages 30--52, ii. Akad. Nauk Ukrain. SSR Inst. Mat., Kiev,
  1988.

\bibitem[VS90]{VSa}
{\`E}.~E. Va{\u\i}sleb and Yu.~S. Samo{\u\i}lenko.
\newblock Representations of operator relations by unbounded operators, and
  multidimensional dynamical systems.
\newblock {\em Ukrain. Mat. Zh.}, 42(8):1011--1019, 1990.

\end{thebibliography}

\end{document}